\documentclass[10pt]{amsart}
\usepackage{graphicx}
\usepackage[width=430pt,height=620pt,centering]{geometry}
\usepackage[latin1]{inputenc}
\usepackage{enumitem}
\usepackage{amssymb}
\usepackage{xspace}
\usepackage{verbatim}
\usepackage{tikz}
\usetikzlibrary{positioning}
\usetikzlibrary{fit}

\vfuzz2pt 
\hfuzz2pt 

\newtheorem{thm}{Theorem}[section]

\newtheorem{prop}[thm]{Proposition}
\theoremstyle{definition}

\theoremstyle{remark}

\numberwithin{equation}{section}


\newcommand{\Harm}[1]{{{\mathcal H}_{#1}}}

\newcommand{\NtSet}{\mathcal{N}}
\newcommand{\Nt}{N}
\newcommand{\T}{t}
\newcommand{\Voc}{\Sigma}
\newcommand{\Axiom}{\mathcal{S}}
\newcommand{\ProdRules}{\mathcal{P}}
\newcommand{\Lang}{\mathcal{L}}

\newcommand{\BigO}[1]{\mathcal{O}(#1)}
\newcommand{\Def}[1]{{\bf #1}}

\newcommand{\Production}{\rightarrow}

\newcommand{\Prob}[1]{\mathbb{P}(#1)}

\newcommand{\WGram}[1]{\mathcal{G}_{#1}}

\newcommand{\Est}{\mathcal{U}}

\newcommand{\W}{\pi}
\newcommand{\WF}[1]{\W(#1)}
\newcommand{\CW}{W}
\newcommand{\D}{\Delta}

\newcommand{\NCW}{m}
\newcommand{\WGF}[2]{{#1}_{#2}(z)}

\newcommand{\PF}[2]{{\mu_{{#1},{#2}}}}
\newcommand{\WProb}[1]{w}

\newcommand{\Mom}{\alpha}
\newcommand{\PMax}[2]{p_{#1,#2}^{\triangle}}

\newcommand{\WMax}[2]{\CW_{#1,#2}^{\triangle}}
\newcommand{\WMin}[2]{\CW_{#1,#2}^{\nabla}}

\newcommand{\UB}{\bullet}
\newcommand{\OP}{\text{{\bf (}}\xspace}
\newcommand{\CP}{\text{{\bf )}}\xspace}




\newcommand{\MCount}[1]{{m_{#1}}}
\newcommand{\MBCount}[2]{{m_{#1,#2}}}

\newcommand{\MPF}[2]{\mu_{#1,#2}}




\newcommand{\BBigO}[1]{\mathcal{O}\left(#1\right)}

\newcommand{\RNAW}{w}
\newcommand{\RNASG}[2]{S_{#1,#2}}

\begin{document}

\title[Collisions in random generation of weighted context-free languages]{Weighted random generation of context-free languages: Analysis of collisions in random urn occupancy models}%
\author{Danièle Gardy}
\address{Laboratoire PRiSM.CNRS UMR 8144 and Université de Versailles St-Quentin en Yvelines, 45 Av. des États-Unis, 78035 Versailles, France}
\email{Daniele.Gardy@prism.uvsq.fr}%

\author{Yann Ponty}
\address{Laboratoire d'Informatique de l'École Polytechnique (LIX),
CNRS UMR 7161/AMIB INRIA. École Polytechnique, 91128 Palaiseau, France}
\email{yann.ponty@lix.polytechnique.fr}%

\date{\today}

\keywords{Random generation, occupancy analysis, birthday paradox, coupon collector, weighted combinatorial objects}%

\begin{abstract}
  The present work analyzes the redundancy of sets of combinatorial objects produced by a weighted random generation algorithm proposed by Denise~\emph{et al}.
  This scheme associates weights to the terminals symbols of a weighted context-free grammar, extends this weight definition multiplicatively on words,
  and draws words of length $n$ with probability proportional their weight.
  We investigate the level of redundancy within a sample of $k$ word, the proportion of the total probability covered by $k$ words (coverage),
  the time (number of generations) of the first collision, and the time of the full collection. For these four questions, we use an analytic urn
  analogy to derive asymptotic estimates and/or polynomially computable exact forms. We illustrate these tools by an analysis of an RNA secondary
  structure statistical sampling algorithm introduced by Ding~\emph{et al}.
\end{abstract}
\maketitle

\section{Introduction}

The random generation of combinatorial objects is both motivated by the exploration
of complex objects, the empirical assessment of statistical properties and by its
applications to numerous fields (analysis of data structures and algorithms~\cite{Bassino2009},
software testing~\cite{Den06,Canou2009}, bioinformatics~\cite{DiLa03}\ldots). Many approaches have been developed to address
the uniform random generation of combinatorial objects of a given size. Historically, the recursive method, formalized by Wilf~\cite{wilf77},
starts by efficiently pre-computing the numbers of objects accessible from local choices, and uses these numbers during the generation
to perform an uniform random generation as an unbiased walk. This approach was later extended and made fully automatic by Flajolet~\emph{et al}~\cite{flajoletcalculus}
for all decomposable combinatorial classes, i. e. classes that are specified constructively within the symbolic framework
as opposed to implicitly defined by a required property.
Finally Duchon~\emph{et al}~\cite{fullboltz} recently relaxed this scheme through Boltzmann sampling.

  Yet certain contexts require a non-uniform -- yet controlled -- distribution to be captured, giving rise to various approaches~\cite{Brlek2006} for the non-uniform generation.
  Denise~\emph{et al}~\cite{Denise2000} introduced weighted context-free grammars where a weight function, defined on the terminals and extended multiplicatively on words,
  induces a Boltzmann distribution over each subset of words of a given length $n$. The resulting languages are then used
  as models for objects following non-uniform distributions, of which natural instances can be found in bioinformatics~\cite{PoTeDe06}.
  An adaptation of the recursive method was proposed~\cite{Denise2000} to draw words of a given size $n$ with respect to a
  weighted distribution. Multidimensional Boltzmann versions of the weighted samplers were also proposed
  for weighted languages by Bodini~\emph{et al}~\cite{Bodini2010}.

   However weighted distributions, by assigning probabilities to possible words that scale exponentially within a class of size,
   may induce a -- possibly large -- redundancy within sampled set of words.
   Since the probability of a word is exactly and efficiently computable such a redundancy is not informative and should be avoided.
   Furthermore, if a non-redundant sample of given cardinality $k$ is expected, one may find situations where the complexity of
   generating $k$ distinct words using a rejection approach becomes heavily dominated by the rejection step. Finally, the proportion
  of the distribution contained within a sampled set may be affected, positively or negatively, by the adjunction of weights.
  One of the authors proposed a non-redundant version of the recursive method~\cite{Ponty2008b} to work around the first issue.
  However the question of the dependency between the weights and the level of redundancy was left open in a general setting.

The aim of the current work is to analyze the redundancy and coverage of a weighted sampled set of words.
To tackle these questions, one can reformulate the repeated generation of words within a weighted language as a
  random allocation of balls into urns. Namely each word $w$ in $\Lang_n$ the restriction of the language to words of length $n$
  will correspond to an urn having probability proportional to the weight of $w$.
  A list of questions naturally arise which can be rephrased into classic random allocations problems:
  \begin{enumerate}
    \item How many words are required before some word is drawn twice?
    This is a weighted instance of the Birthday \emph{paradox} (the first 2-birthday~\cite{FlaGarThi92}).
    \item How many words must be sampled before each word in $\Lang_n$ is encountered at least once?
    One finds in the above formulation the Coupon collector problem.
    \item How many distinct words are there after sampling $k$ words?
    This is equivalent to the expected number of urns having positive load after throwing  $k$ balls.
    \item What is the coverage, i.e. the cumulated weight/probability of a non-redundant sampled set after $k$ generations?
    This last problem rephrases as the cumulated weight/probability of urns having positive load after throwing $k$ balls.
  \end{enumerate}

In this paper, we address and provide closed formulae and/or asymptotic estimates for these four statistical quantities under natural
conditions of non-degeneracy, and illustrate our results with an analysis of a statistical sampling algorithm used to
predict the folding of RNA. After this short introduction we remind in Section~\ref{sec:def} some basic notions related to context-free grammars, languages, algebraic functions
and their weighted analogs. In Section~\ref{sec:results}, we state our main results on weighted context-free languages in the form of four theorems dedicated
to the four questions above. General results on weighted urns models are established or recalled in Section~\ref{sec:theorems}, of which our theorems are direct corollaries.
We apply in Section~\ref{sec:rna} our theorems to an analysis of a statistical sampling algorithm used to predict the functional folding of RNAs,
using the fact that the three-dimensional structure of an RNA can be modeled by a secondary structure, i. e. a word of a Motzkin-like context-free language.
We conclude with some possible extensions of the current work.

\section{Definitions and notations}\label{sec:def}
  \subsection{Weighted context-free languages}
    Throughout the rest of the document, $n$ will stand for the length of generated words.
    For the sake of self-containment, let us start by recalling some definitions found in Denise~\emph{et al}~\cite{Denise2000}.

    A \Def{weighted context-free grammar} $\WGram{\W}$  is a 5-tuple $(\W,\Voc,\NtSet,\ProdRules,\Axiom)$ such that
	\begin{itemize}
		\item $\Voc$ is the alphabet, i.e. a finite set of terminal symbols.
		\item $\NtSet$ is a finite set of non-terminal symbols.
		\item $\ProdRules$ is the finite set of production rules, each of the form $\Nt\Production X$,
		for $\Nt\in\NtSet$ any non-terminal and $X\in \{\Voc\cup\NtSet\}^*$.
		\item $\Axiom$ is the \Def{axiom} of the grammar, i. e. the initial non-terminal.
        \item $\W$ is a positive \Def{weight} vector $\W=(\W_t)_{t\in \Voc}$, assigning positive weights to each letter $\T_i\in\Voc$.
	\end{itemize}
    Let us further assume that the input grammar is unambiguous.
    This is a real limitation, however a similar analysis for intrinsically ambiguous
    languages is rather challenging since the associated generating functions are not necessarily algebraic but possibly transcendental~\cite{Flajolet1987}.

    Let us denote by $\Lang$ be the language generated from the axiom of $\WGram{\W}$, and by $\Lang_n$ its restriction to words of size $n$.
    One can extend the weight multiplicatively on any word $w\in\Lang$ such that $$\WF{w} = \prod_{t \in w} \W_t.$$
    This gives rise to the notion of \Def{weighted generating function} $L_{\W}(z)$ for a context-free language $\Lang$, a natural
    generalization of the ordinary generating function where each word is counted with multiplicity equal to its weight
    $$ L_{\W}(z) = \sum_{w\in\Lang} \WF{w}z^{|w|} = \sum_{n\ge 0} \PF{\W}{n} z^n$$
    where $\PF{n}{\W} = \sum_{w\in\Lang_n} \WF{w}$ is the total weight of words of size $n$.
    In particular, the number $\MCount{n}$ of words of size $n$ can be also expressed as $\MCount{n}=|\Lang_n|=\PF{n}{{\bf 1}}$.

    The weighting scheme then defines a \Def{weighted distribution} on $\Lang_n$ through
    $$ \Prob{w\;|\;n,\W} = \frac{\WF{w}}{\sum_{w'\in \Lang_n} \WF{w'}} = \frac{\WF{w}}{\PF{n}{\W}}.$$

    Finally let us define the \Def{$k$-th moment} of a $\W$-weighted distribution as
  \begin{equation}\Mom_{k,n} = \sum_{i=1}^{\MCount{n}} p_i^k = \frac{\sum_{w\in \Lang_n} \WF{w}^k}{\PF{\W}{n}^k} = \frac{\PF{\W^k}{n}}{\PF{\W}{n}^k}.\label{eq:weightedMoments}\end{equation}

    \subsection{Asymptotics of coefficients}
    The (weighted) generating function of an unambiguous context-free language is a positive solution of an algebraic system of equations,
    therefore its singularities are algebraic.
    Let us first assume that the dominant singularity $\rho_{\W}$ is unique.

    Then, for any fixed $\W$, the coefficients of $L_{\W}(z)$ admit an asymptotic equivalent of the form
      \begin{equation} [z^n]\; L_{\W}(z) = \PF{\W}{n} \sim \kappa_{\W}\cdot\rho_{\W}^{-n}\cdot n^{-k_{\W}}\left(1+\BigO{n^{-k'_{\W}}}\right),\label{eq:genexpansion}\end{equation}
      for $\rho_{\W}\in (0,1]$, $\kappa_{\W}$ some positive real value, and $k_{\W}, k'_{\W}$ some positive rational numbers such that $k'_{\W}>0$.
       The asymptotic equivalent for the number of words $\MCount{n}= |\Lang_n| = [z^n]\; L(z)$ is obtained as a special case of the above, yielding
       \begin{equation} \MCount{n} = |\Lang_n| = [z^n]\; L(z) \sim \kappa\cdot\rho^{-n}\cdot n^{-k}\left(1+\BigO{n^{-k'}}\right)\end{equation}
       with $\rho:= \rho_{{\bf 1}}$, $\kappa:=\kappa_{{\bf 1}}$, $k := k_{{\bf 1}}$ and $k' := k'_{{\bf 1}}>0$ defined as above.

    If the assumption on the unicity of the dominant singularity does not hold, then different singularities may be found on the circle of radius $\rho_{\W}$.
    In this case the coefficients of the generating functions do not admit an universal expansion of the form described in Equation~\ref{eq:genexpansion} since
    the contributions of various singularities may cancel out.

    \subsection{Weight classes}
    Let us denote by ${\bf\CW}_n$ the vector of all possible and distinct weights within $\Lang_n$ ordered increasingly ($\CW_{n,i}<\CW_{n,i+1}$).
    In particular, let $\WMin{\W}{n}:=\CW_{n,1}$ (resp. $\WMax{\W}{n}:=\CW_{n,|{\bf\CW}_n|}$) be the \Def{minimal} (resp. \Def{maximal}) \Def{weight} of a word within $\Lang_n$.
    We denote by ${\bf \NCW}_{n,i}\subset \Lang_n$ the class of words having weight ${\CW}_{n,i}$ and by $\MBCount{n}{i}=|{\bf \NCW}_{n,i}|$ its cardinality.

  \section{Main results}\label{sec:results}
    Let $\WGram{\W}$ be a weighted context-free grammar generating a language $\Lang$, $\W$ its a weight vector and $n\in \mathbb{N}$ a length.
    Remind that $\WMin{\W}{n}$ and $\WMax{\W}{n}$ are the minimal and maximal weight of a word in $\Lang_n$ respectively.
    Let $\rho_{\W}$ be the dominant singularity of $L_{\W}(z)$, and consider the following conditions:
    \begin{enumerate}[label={\bf C\arabic*}]
      \item\label{cond:algebraicSing}Diversity: Let $\PMax{n}{\W}:=\WMax{\W}{n}/\PF{\W}{n}$ be the probability of  the most probable word within $\Lang_n$ with respect
      to a weight function $\W$, then there exists $\beta>1$ such that
      $\PMax{\W}{n} \in \BigO{\beta^{-n}}$.
      \item\label{cond:logPosWeights}Log-positive weights: For each terminal symbol $\T\in\Voc$, $\W_{\T}>1$.
      \item \label{cond:diversity}Bounded dependency: For any rational number $k>1$ and any weight vector $\W$  such that Condition~\ref{cond:logPosWeights} holds, ${\rho_{\W}}^k < \rho_{\W^k}$ holds.
    \end{enumerate}

  \begin{thm}[First collision]\label{th:weightedBirthday}
    Under conditions~\ref{cond:algebraicSing},~\ref{cond:logPosWeights} and~\ref{cond:diversity}, the expected number of generations $E[B_{n,\W}]$ before some word of $\Lang_n$ is drawn twice is such that
      \begin{equation}
      E[B_{n,\W}] \sim \frac{\sqrt{\pi}}{\sqrt{2\Mom_{2,n}}}  = \frac{\PF{\W}{n}\sqrt{\pi}}{\sqrt{2\PF{\W^2}{n}}}  \in \Omega\left(\gamma^n\right),\quad \gamma:=\frac{\rho_{\W}}{\sqrt{\rho_{\W^2}}}>1   \end{equation}
  \end{thm}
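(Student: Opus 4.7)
The plan is to obtain this theorem as a direct corollary of a general weighted birthday-paradox result for urn models (to be established in Section~\ref{sec:theorems}), and to translate its hypotheses into the language of weighted context-free grammars using the algebraic singularity analysis recalled in Section~\ref{sec:def}.

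First, I would invoke the weighted analogue of the classical birthday paradox: for an urn scheme with probability vector $(p_1,\dots,p_N)$, the expected waiting time to the first $2$-collision is asymptotic to $\sqrt{\pi/(2\Mom_{2})}$ provided that a suitable dispersion condition ensures that no single urn dominates asymptotically. Condition~\ref{cond:algebraicSing} supplies precisely this hypothesis, since $\PMax{n}{\W} \in \BigO{\beta^{-n}}$ forces the maximal probability to vanish exponentially in $n$. The first equality in the statement then follows at once from the definition of $\Mom_{2,n}$ given in Equation~\ref{eq:weightedMoments}.

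Next, I would derive the exponential estimate on $E[B_{n,\W}]$ by plugging the asymptotic expansion of Equation~\ref{eq:genexpansion} into both $\PF{\W}{n}$ and $\PF{\W^2}{n}$. This yields
\[
  \Mom_{2,n} = \frac{\PF{\W^2}{n}}{\PF{\W}{n}^2} \sim \frac{\kappa_{\W^2}}{\kappa_{\W}^2} \left(\frac{\rho_{\W}^2}{\rho_{\W^2}}\right)^{n} n^{2k_{\W}-k_{\W^2}},
\]
and Condition~\ref{cond:diversity} applied with $k=2$ forces the geometric ratio $\rho_{\W}^2/\rho_{\W^2}$ to be strictly smaller than $1$. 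Taking the inverse square root converts this exponential decay into the announced exponential growth of $E[B_{n,\W}]$ at rate $\gamma$, while the $n^{-k_\W}$ polynomial factors only contribute a lower-order correction that is absorbed in the $\Omega$ notation.

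I expect the main obstacle to lie in the rigorous derivation of the general urn-model birthday estimate invoked at the first step. While the uniform case is classical, its weighted generalization requires careful control of both the leading term and the error in a Poissonized inclusion-exclusion expansion for $\Prob{B_{n,\W}>k}$, under the dispersion hypothesis supplied by Condition~\ref{cond:algebraicSing}; in particular, one must show that the contribution of higher-order collisions is negligible with respect to the two-collision term. Condition~\ref{cond:logPosWeights} plays an auxiliary role throughout, ensuring that $\rho_{\W}<1$ and keeping us in the non-degenerate regime where Equation~\ref{eq:genexpansion} applies as a true single-dominant-singularity expansion, so that the ratio extracted from the two singularity analyses genuinely reflects an exponential separation between the $\W$- and $\W^2$-weighted enumerations.
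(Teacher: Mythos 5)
Your overall architecture matches the paper's: Theorem~\ref{th:weightedBirthday} is obtained as a corollary of a general urn-model birthday theorem (Theorem~\ref{th:birthday}) whose hypotheses are verified for weighted languages via singularity analysis (Proposition~\ref{prop:birthday}), and your second step -- extracting the exponential order of $\Mom_{2,n}$ from the dominant singularities of $L_{\W}$ and $L_{\W^2}$ and invoking Condition~\ref{cond:diversity} with $k=2$ -- is exactly the paper's computation. (Incidentally, your display shows the growth rate is $\sqrt{\rho_{\W^2}}/\rho_{\W}>1$; the paper's $\gamma:=\rho_{\W}/\sqrt{\rho_{\W^2}}$ as printed is the reciprocal of this, an apparent typo which your calculation implicitly corrects.)

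The gap is in your first step. The general theorem is not of the form ``the asymptotic $\sqrt{\pi/(2\Mom_{2})}$ holds provided no single urn dominates'': its hypotheses require the existence of a cut-point $\tau$ with (A) $p_{\max}\,\tau<1$, (B) $\sqrt{\Mom_{2}}\,\tau\to\infty$ and (C) $\sqrt[3]{\Mom_{3}}\,\tau\to 0$, because the proof splits the integral $E(B)=\int_0^\infty \prod_i(1+p_i t)e^{-t}\,dt$ at $\tau$, uses the second moment to drive the Gaussian approximation on $[0,\tau]$, and uses the third moment to control the error and the tail. You attribute the entire dispersion control to Condition~\ref{cond:algebraicSing}, but exponential smallness of $p_{\max}$ alone does not produce a single $\tau$ satisfying (B) and (C) simultaneously. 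The missing idea is the choice $\tau_n:=\Mom_{k,n}^{-1/k}$ for some rational $2<k<3$, together with the observation that Condition~\ref{cond:diversity} applied to all exponents $k>1$ (not only $k=2$) yields the strict chain $\sqrt{\rho_{\W^2}}<\sqrt[k]{\rho_{\W^k}}<\sqrt[3]{\rho_{\W^3}}$, hence an exponential separation of the moment scales $s_{n,2}\gg s_{n,k}\gg s_{n,3}$ that squeezes $\tau_n^{-1}$ strictly between $\sqrt{\Mom_{2,n}}$ and $\sqrt[3]{\Mom_{3,n}}$; condition (A) then follows from the elementary bound $s_{n,k}>\WMax{\W}{n}$. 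Without this squeeze the invocation of the general theorem is unjustified, and the ``main obstacle'' you defer to future work is in fact the heart of the proof rather than a technical afterthought.
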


  \begin{thm}[Full collection]\label{th:weightedCoupon}
    The expected number of generations $E[C_{n,\W}]$  before all the words in $\Lang_n$ are generated at least once is such that
      \begin{equation}
      \frac{\MPF{\W}{n}}{\WMin{\W}{n}} \le E[C_{n,\W}] \le 2\cdot \Harm{\MCount{n}}\cdot\frac{\MPF{\W}{n}}{\WMin{\W}{n}}
      \end{equation}
    which, for large values of $n$, adopts the equivalent
      \begin{equation}
    \frac{\kappa_{\W}\cdot\rho_{\W}^{-n}}{\WMin{\W}{n}\cdot n^{k_{\W}}} \le E[C_{n,\W}] \le \frac{2\cdot\log(1/\rho)\cdot\kappa_{\W}\cdot\rho_{\W}^{-n}}{\WMin{\W}{n}\cdot n^{k_{\W}-1}}.\\
      \end{equation}

    Moreover in the uniform distribution ($\W ={\bf 1}$) the above expression simplifies into
    \begin{equation}
        E[C_{n,1}]  = \MCount{n}\cdot \Harm{\MCount{n}} \sim \frac{\kappa\cdot\log(1/\rho)\cdot \rho^{-n}}{n^{k-1}}\left(1+\BBigO{1/n^{k'}}\right).
    \end{equation}
  \end{thm}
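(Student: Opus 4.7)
My plan is to reinterpret each independent generation from the weighted distribution over $\Lang_n$ as throwing a ball into an urn system indexed by the words of $\Lang_n$, where urn $w$ receives a ball with probability $p_w := \WF{w}/\MPF{\W}{n}$. Under this correspondence, $C_{n,\W}$ is exactly the waiting time until every urn has received at least one ball, i.e.\ a weighted coupon collector, and the theorem becomes a corollary of the generic urn result to be established in Section~\ref{sec:theorems} combined with the singularity expansion~(\ref{eq:genexpansion}).

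For the lower bound I would single out a word $w^\star \in \Lang_n$ of minimal weight: its hitting time $T_{w^\star}$ is geometric with mean $1/p_{w^\star} = \MPF{\W}{n}/\WMin{\W}{n}$, and since $C_{n,\W} \ge T_{w^\star}$ we conclude $E[C_{n,\W}] \ge \MPF{\W}{n}/\WMin{\W}{n}$. For the upper bound I would rely on the union-type tail estimate
\begin{equation*}
\Prob{C_{n,\W} > t}\le \sum_{w\in\Lang_n} (1 - p_w)^t \le \MCount{n}\,(1-p_{\min})^t, \qquad p_{\min}:=\WMin{\W}{n}/\MPF{\W}{n},
\end{equation*}
combined with $E[C_{n,\W}] = \sum_{t\ge 0}\Prob{C_{n,\W} > t}$. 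Splitting the sum at $t^\star := \lceil \Harm{\MCount{n}}/p_{\min}\rceil$, the head contributes at most $t^\star$ while the geometric tail contributes at most $1/p_{\min}$, giving $E[C_{n,\W}] \le (1 + \Harm{\MCount{n}})/p_{\min} \le 2\,\Harm{\MCount{n}}/p_{\min}$, valid for every $\MCount{n}\ge 1$ since $\Harm{N} \ge \max(1,\log N)$. Re-substituting the definition of $p_{\min}$ yields the announced inequality.

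The asymptotic refinements are then routine substitutions: plugging $\MPF{\W}{n}\sim \kappa_{\W}\rho_{\W}^{-n} n^{-k_{\W}}$ into the two bounds delivers the dominant algebraic factors, and the upper bound additionally invokes $\Harm{\MCount{n}} \sim \log \MCount{n} \sim n\log(1/\rho)$ (the subdominant $-k\log n$ and constant contributions being absorbed in the remainder). In the uniform case $\W = \mathbf{1}$ all probabilities equal $1/\MCount{n}$, the problem reduces to the ordinary equiprobable coupon collector with expectation exactly $\MCount{n}\,\Harm{\MCount{n}}$, and the claimed equivalent follows immediately from the expansion of $\MCount{n}$. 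The main delicate step is pinning down the clean constant $2$ in the upper bound: a crude expectation-based bound $E[C_{n,\W}] \le \sum_w E[T_w]$ would overshoot by a factor $\MCount{n}/\Harm{\MCount{n}}$, so the tail decomposition above, rather than linearity of expectation applied to the individual hitting times, is essentially unavoidable.
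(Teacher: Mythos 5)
Your proposal is correct in substance but the upper bound is obtained by a genuinely different route from the paper's. The lower bound is identical: the collection time dominates the geometric hitting time of the least probable word, giving $\MPF{\W}{n}/\WMin{\W}{n}$. For the upper bound, however, the paper does not argue from first principles: it invokes the result of Berenbrink and Sauerwald, namely that $E[C_{\W}]\le 2\,\Est_m$ for the estimator $\Est_m=\sum_{i=1}^m \tfrac{1}{i\,p_i}$ (with probabilities sorted increasingly), and then bounds $\Est_m\le \Harm{m}/p_1$ by noting $p_i/p_1\ge 1$. Your union-bound-plus-tail-splitting argument ($\Prob{C>t}\le\sum_w(1-p_w)^t$, split at $t^\star=\lceil \Harm{\MCount{n}}/p_{\min}\rceil$) is elementary and self-contained, which is a real advantage; what the paper's route buys in exchange is the sharper distribution-sensitive estimate $\Est_m$ together with a matching lower bound $\Est_m/(3e\log\log m)$, which the authors use informally (Figure~\ref{fig:CouponBerenbrink}) to argue that the $\Harm{m}/p_1$ upper bound is actually attained for weighted Motzkin words. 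One bookkeeping point in your tail argument: the head contributes at most $t^\star\le \Harm{\MCount{n}}/p_{\min}+1$ and the tail at most $1/p_{\min}$, so the total is $\bigl(\Harm{\MCount{n}}+2\bigr)/p_{\min}$ rather than $\bigl(\Harm{\MCount{n}}+1\bigr)/p_{\min}$; the constant $2$ in the theorem then requires $\Harm{\MCount{n}}\ge 2$, i.e.\ $\MCount{n}\ge 4$, with the handful of smaller cases checked directly (or absorbed into the asymptotic regime the theorem addresses). The asymptotic substitutions and the exact uniform case $\MCount{n}\,\Harm{\MCount{n}}$ match the paper.
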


  \begin{thm}[Distinct samples]\label{th:weightedDistinct}
    The expected number $E[N_{n,\W,k}]$ of distinct words obtained after $k$ generations is such that
      \begin{equation}
      E[N_{n,\W,k}] =  \sum_{i=1}^{|{\bf \CW}|} \MBCount{\W}{i}\cdot\left( 1 - \left( 1 - \frac{\CW_{n,i}}{\PF{\W}{n}} \right)^k \right) =  \sum_{i=1}^{m} \MBCount{\W}{i}\cdot\left(1-e^{-\frac{\CW_{n,i}}{\PF{\W}{n}}k}\right) + \BigO{1}.
      \end{equation}
  \end{thm}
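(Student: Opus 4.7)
The plan is to first establish the exact formula by a direct indicator-variable computation, and then derive the Poissonized form by a uniform-error estimate comparing $(1-p)^k$ to $e^{-kp}$.

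For the exact identity, I would introduce for each word $w \in \Lang_n$ the indicator $X_w$ that $w$ is produced at least once among the $k$ independent draws, so that $N_{n,\W,k} = \sum_{w\in\Lang_n} X_w$. Since each draw selects $w$ independently with probability $p_w = \WF{w}/\PF{\W}{n}$, one has $\Prob{X_w = 0} = (1-p_w)^k$ and hence $E[X_w] = 1 - (1-p_w)^k$. Linearity of expectation then gives $E[N_{n,\W,k}] = \sum_{w \in \Lang_n}\bigl(1-(1-p_w)^k\bigr)$. Grouping words by their weight class---every word in ${\bf \NCW}_{n,i}$ has common weight $\CW_{n,i}$, hence common probability $\CW_{n,i}/\PF{\W}{n}$, and there are $\MBCount{\W}{i}$ such words---yields the first equality.

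For the Poissonized form, I would compare $(1-p_i)^k$ with $e^{-kp_i}$ via the factorization $(1-p)^k = e^{-kp}\exp\bigl(-k\sum_{j\ge 2} p^j/j\bigr)$. The tail series is non-negative and, for $p$ bounded away from $1$, dominated by $C k p^2$ for an absolute constant $C$, so pointwise $0 \le e^{-kp_i} - (1-p_i)^k \le C k p_i^2 e^{-kp_i}$. Condition~\ref{cond:algebraicSing} guarantees that, for $n$ large enough, every $p_i$ is exponentially small in $n$, so this pointwise bound applies across all weight classes. Multiplying by $\MBCount{\W}{i}$ and summing, the cumulative error is controlled by $C\sum_i \MBCount{\W}{i}\, p_i \cdot (k p_i e^{-kp_i})$; since $x \mapsto x e^{-x}$ is bounded by $1/e$ on $[0,\infty)$ and $\sum_i \MBCount{\W}{i}\, p_i = 1$, this error is $\BigO{1}$, uniformly in both $k$ and $n$, giving the second equality.

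The main obstacle is precisely the uniform control of the Poissonization error: a naive termwise estimate such as $\sum_i \MBCount{\W}{i} k p_i^2 = k\Mom_{2,n}$ grows with $k$ (and in the birthday regime $k \sim 1/\sqrt{\Mom_{2,n}}$ of Theorem~\ref{th:weightedBirthday} only becomes small, not bounded). The key is to retain the $e^{-kp_i}$ factor \emph{inside} the sum so that the combination $k p_i e^{-kp_i}$ is bounded by the maximum of $x e^{-x}$, after which the remaining factor collapses through the normalization $\sum_i \MBCount{\W}{i}\, p_i = 1$ to an absolute constant.
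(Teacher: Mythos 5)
Your proof is correct, but it takes a genuinely different route from the paper on both halves of the statement. For the exact formula, the paper does not use indicator variables: in Section~\ref{sec:theorems} it derives the weighted occupancy identity $E[W_k]=\sum_i \CW_i\bigl(1-(1-p_i)^k\bigr)$ from the bivariate generating function $\Psi_\W(x,y)=\prod_{i=1}^m\bigl(1+x^{\CW_i}(e^{p_i y}-1)\bigr)$ (each urn is either empty or carries a non-empty set of balls), extracting the expectation by differentiating in $x$ at $x=1$ and reading off the coefficient of $y^k/k!$; Theorem~\ref{th:weightedDistinct} is then obtained as the special case $\CW_i=1$. Your linearity-of-expectation computation is more elementary and reaches the same exact formula with less machinery, though it does not simultaneously deliver the weighted coverage formula of Theorem~\ref{th:weightedCoverage} the way the generating-function calculation does. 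For the Poissonized form with $\BigO{1}$ error, the paper gives no proof at all --- it merely asserts the estimate with a citation to Hwang and Janson --- whereas you supply a complete and correct argument. Your key point is the right one: retaining the factor $e^{-kp_i}$ inside the error term so that $kp_ie^{-kp_i}\le 1/e$, after which $\sum_i \MBCount{\W}{i}\,p_i=1$ collapses the error to an absolute constant, uniformly in $k$ and $n$; the naive bound $k\,\Mom_{2,n}$ would indeed fail to be $\BigO{1}$ for large $k$. (A minor remark: the appeal to Condition~\ref{cond:algebraicSing} to keep the $p_i$ bounded away from $1$ is not strictly necessary --- at most one word can have probability exceeding $1/2$, and its contribution to the error is at most $1$ --- but it is harmless.) Overall your version is self-contained where the paper's is not.
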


  \begin{thm}[Coverage]\label{th:weightedCoverage}
    In a weighted distribution, the expected cumulated probability $E[P_{n,\W,k}]\in [0,1]$ of the set of distinct words obtained after $k$ generations is given by
      \begin{equation}
      E[P_{n,\W,k}] =  \sum_{i=1}^{|{\bf \CW}|} \MBCount{\W}{i}\cdot\frac{\CW_{n,i}}{\PF{\W}{n}}\cdot\left( 1 - \left( 1 - \frac{\CW_{n,i}}{\PF{\W}{n}} \right)^k \right).
      \end{equation}

    Moreover if Condition~~\ref{cond:algebraicSing} is satisfied, then there exists $\beta>1$ such that, for any $k\in o(\beta^n)$, one has
    \begin{equation} E[P_{n,\W,k}] =  k\cdot\Mom_{2,n} \left( 1 + \BigO{\beta^{-n}}\right).\end{equation}
  \end{thm}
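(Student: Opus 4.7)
The plan is in two stages: first derive the exact closed form by a linearity-of-expectation argument over balls-and-urns indicators, then expand $1-(1-p)^k$ around $p=0$ and control the tail using the diversity assumption of Condition~\ref{cond:algebraicSing}.

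For the exact formula, I view each word $w \in \Lang_n$ as an urn of probability $p_w := \WF{w}/\PF{\W}{n}$, and let $X_w \in \{0,1\}$ be the indicator that $w$ was produced at least once during the $k$ independent generations. The realised cumulated probability is then $P_{n,\W,k} = \sum_w p_w\,X_w$, and by independence of the draws $E[X_w] = 1-(1-p_w)^k$. Linearity of expectation gives $E[P_{n,\W,k}] = \sum_{w \in \Lang_n} p_w\bigl(1-(1-p_w)^k\bigr)$. Grouping words that share a common weight, there being $\MBCount{\W}{i}$ words of probability $\CW_{n,i}/\PF{\W}{n}$ in the $i$-th weight class, collapses the sum into the stated closed form.

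For the asymptotic, I would use the two-sided bound
$$0 \le kp - \bigl(1-(1-p)^k\bigr) \le \binom{k}{2} p^2 \qquad (p \in [0,1]),$$
obtained by noting that $g(p):=kp-\bigl(1-(1-p)^k\bigr)$ satisfies $g(0)=g'(0)=0$ and $g''(p)=k(k-1)(1-p)^{k-2}\le k(k-1)$, so that $g(p) \le \binom{k}{2}p^2$ follows by integrating twice. Multiplying by $p_w$, summing over $w \in \Lang_n$, and using $\Mom_{j,n}=\sum_w p_w^j$ yields
$$E[P_{n,\W,k}] \;=\; k\cdot\Mom_{2,n}\;+\;\BigO{k^2\,\Mom_{3,n}}.$$

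The only delicate step --- and where I expect to spend most care --- is converting this into the relative error $\BigO{\beta^{-n}}$ claimed in the theorem. Using the clean inequality $\Mom_{3,n}\le \PMax{\W}{n}\cdot\Mom_{2,n}$ together with Condition~\ref{cond:algebraicSing}, which supplies $\PMax{\W}{n}\in\BigO{\beta^{-n}}$, the natural estimate gives a relative error of order $k\cdot\beta^{-n}$. For any $k\in o(\beta^n)$ this already vanishes, yielding the stated expansion; if needed, I would make the admissible growth of $k$ explicit (for instance $k$ polynomially bounded in $n$) so that the relative error is genuinely $\BigO{\beta^{-n}}$, adjusting $\beta$ downward within the slack allowed by Condition~\ref{cond:algebraicSing}.
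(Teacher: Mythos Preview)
Your argument is correct, and for the exact closed form it is genuinely different from the paper's. The paper derives Equation~\eqref{eq:occupancy} via analytic combinatorics: it encodes all allocations in a bivariate exponential generating function $\Psi_{\W}(x,y)=\prod_{i}\bigl(1+x^{\CW_i}(e^{p_i y}-1)\bigr)$, takes $\partial_x$ at $x=1$, and extracts the coefficient of $y^k/k!$. Your route via indicator variables and linearity of expectation is more elementary and avoids the bivariate machinery entirely; what it gives up is the access to higher moments or full distributions that the generating-function encoding would provide, but for the expectation alone your argument is both shorter and more transparent.

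For the asymptotic part the two arguments are close in spirit: the paper writes $1-(1-p_i)^k=1-e^{k\log(1-p_i)}$, uses $\log(1-p_i)=-p_i+\BigO{p_i^2}$, and expands the exponential under the hypothesis $k\,p_m\in o(1)$. Your two-sided bound $0\le kp-(1-(1-p)^k)\le\binom{k}{2}p^2$ is a cleaner route to the same place, and your bookkeeping of the error as $\BigO{k^2\Mom_{3,n}}$ (hence relative error $\BigO{k\,\PMax{\W}{n}}=\BigO{k\,\beta^{-n}}$) is in fact more careful than the paper's, which records only $\BigO{k\,p_i^2}$ after the exponential expansion. Your closing remark about the literal $\BigO{\beta^{-n}}$ relative error versus the admissible growth of $k$ is well observed; the paper does not discuss this point, and your suggestion to absorb the slack by adjusting $\beta$ is the natural fix.
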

  Remark that there are at most $(n+1)^{|\Voc|}$ different compositions/classes of weights, and therefore Theorems~\ref{th:weightedDistinct} and~\ref{th:weightedCoverage}
  immediately suggest polynomial time algorithms for computing the expected number of distinct words and coverage respectively.

  \subsection{Discussing the loss of generality}
    Let us discuss the loss of generality induced by the above conditions:
    \begin{itemize}
      \item Condition~\ref{cond:algebraicSing} requires that no polynomial group of words contribute asymptotically to a significant
      part of the weighted distribution. This is the typical case in weighted languages, as the exponential growth of $\PF{\W}{n}$ usually arises as a cooperation
      between the natural combinatorial explosion of the numbers of words and their individual weights. However this condition is restrictive,
      and discards languages of polynomial growth, or grammars where a (strongly connected) component of polynomial growth dominates asymptotically.
      \item Condition~\ref{cond:logPosWeights} can be assumed without loss of generality since the weighted distribution is stable through
      the multiplication of all weights by a positive constant.
      \item Condition~\ref{cond:diversity}: Remember that Condition~\ref{cond:algebraicSing}
      implies that there exist some constants $C>0$ and $\beta>1$ such that $\W(w) \le C\cdot\PF{\W}{n}/\beta^{n}$ for all $w\in \Lang_n$.
      It follows that, for all $k>1$,
      \begin{align*}
      \PF{\W^k}{n} &= \sum_{w\in \Lang_n}\W(w)^k \le \sum_{w\in \Lang_n}\W(w)\cdot \left(\frac{C\cdot\PF{\W}{n}}{\beta^{n}}\right)^{k-1} = \PF{\W}{n}^k\cdot \frac{C^{k-1}}{\beta^{(k-1)\cdot n}}.
       \end{align*}
       Consequently the exponential growth factor $\rho_{\W^k}^{-1}$ of $\PF{\W^k}{n}$ is such that
       $$\rho_{\W^k}^{-1} \le \left(\beta^{(k-1)}\rho_{\W}^k\right)^{-1}<{\rho_{\W}^{k}}^{-1}$$
       and Condition~\ref{cond:diversity} is a direct consequence of Condition~\ref{cond:algebraicSing}.
    \end{itemize}

\section{General theorems}\label{sec:theorems}
  In the following section we establish general results on non-uniform urn models, which we apply to weighted distributions.
  Let ${\bf u}$ be a set of urns, $m = |{\bf u}|$ its cardinality and, for each $u_i\in {\bf u}$, let $\CW_{i}$ be the weight of $u_i$,
  and $p_{i}$ its probability. This defines a probability distribution ${\bf p} = (p_i)_{i=1}^m$ such that $\sum_{i = 1}^{m} {p_{i}} = 1$ and for all $i\in[1,m-1]$, $p_{i} \le p_{i+1}$.

\subsection{Birthday paradox: First collision}

\begin{thm}\label{th:birthday}
Assume there exists $\tau:=\tau({\bf p})$ such that
  \begin{enumerate}[label=({\Alph*})  ]
    \item $p_{m} \cdot\tau < 1$;\label{cond:smallLargestProba}
    \item $\sqrt{\Mom_{2}} \cdot\tau \rightarrow + \infty$ when $m\to \infty$;\label{cond:secondMoment}
    \item $\sqrt[3]{\Mom_{3}} \cdot\tau \rightarrow 0$ when $m\to \infty$;\label{cond:thirdMoment}
  \end{enumerate}

Then the waiting time $E(B)$ of the first birthday can be approximated by
\[
E(B) = \sqrt{\frac{\pi}{2\alpha_2}} (1+o(1)).
\]
\end{thm}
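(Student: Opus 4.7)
The plan is to reduce $E(B)$ to a single Laplace-type integral and then expand it around the Gaussian saddle, using conditions~(A)--(C) to control the error terms. Since $\{B>k\}$ is the event that the first $k$ draws hit pairwise distinct urns, one has $P(B>k)=k!\,e_k(p_1,\ldots,p_m)$, with $e_k$ the $k$-th elementary symmetric polynomial in the $p_i$. Summing over $k$ and replacing $k!$ by $\int_0^\infty t^k e^{-t}\,dt$ (allowed by Tonelli, all terms being non-negative), the identity $\sum_k e_k z^k=\prod_i(1+p_iz)$ collapses the sum into
\begin{equation*}
E(B) \;=\; \sum_{k\ge 0}P(B>k) \;=\; \int_0^\infty e^{-t}\prod_{i=1}^m(1+p_it)\,dt.
\end{equation*}

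I would then bracket the integrand between two Gaussians using the elementary inequalities $x-x^2/2\le\log(1+x)\le x-x^2/2+x^3/3$, valid for all $x\ge 0$ (each follows by integrating a manifestly non-negative remainder from $0$). Substituting $x=p_it$, summing over $i$, and subtracting $t$ gives
\begin{equation*}
e^{-\Mom_{2}\,t^2/2} \;\le\; e^{-t}\prod_i(1+p_it) \;\le\; e^{-\Mom_{2}\,t^2/2+\Mom_{3}\,t^3/3}\qquad(t\ge 0).
\end{equation*}
The lower bound already yields $E(B)\ge\int_0^\infty e^{-\Mom_{2}t^2/2}\,dt=\sqrt{\pi/(2\Mom_{2})}$ for free. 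For a matching upper bound I would split at $\tau$: on $[0,\tau]$, condition~(C) forces $\Mom_{3}\tau^3/3=o(1)$, so $e^{\Mom_{3}t^3/3}=1+o(1)$ uniformly, and hence $\int_0^\tau e^{-\Mom_{2}t^2/2+\Mom_{3}t^3/3}\,dt\le(1+o(1))\sqrt{\pi/(2\Mom_{2})}$.

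The main obstacle is the tail on $[\tau,\infty)$, where the cubic correction blows up and the Taylor upper bound becomes useless. I would handle it via the log-concavity of $F(t):=e^{-t}\prod_i(1+p_it)$: a direct computation gives $(\log F)'(0)=0$ and
\begin{equation*}
-(\log F)'(t) \;=\; t\sum_i\frac{p_i^2}{1+p_it} \;\ge\; \frac{\Mom_{2}\,t}{1+p_m t},
\end{equation*}
which is monotonically increasing in $t$. Condition~(A) bounds $1+p_m\tau<2$, so this drift is at least $\Mom_{2}\tau/2$ for every $t\ge\tau$; integrating gives $F(t)\le F(\tau)\,e^{-\Mom_{2}\tau(t-\tau)/2}$, and inserting the previous Taylor upper bound on $F(\tau)$ yields
\begin{equation*}
\int_\tau^\infty F(t)\,dt \;\le\; \frac{2\,F(\tau)}{\Mom_{2}\tau} \;\le\; \frac{2\,e^{-\Mom_{2}\tau^2/2+\Mom_{3}\tau^3/3}}{\Mom_{2}\tau},
\end{equation*}
which conditions~(B) and~(C) together force to be $o(\sqrt{1/\Mom_{2}})$. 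Adding the bulk and tail upper bounds and matching against the lower bound concludes $E(B)\sim\sqrt{\pi/(2\Mom_{2})}$. The delicate step is precisely this tail control, where the assumption $p_m\tau<1$ enters exactly to keep $1+p_m\tau<2$ in the drift estimate; the remainder is a textbook application of Laplace's method.
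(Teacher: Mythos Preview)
Your proof is correct and follows the same overall architecture as the paper's: the integral representation $E(B)=\int_0^\infty e^{-t}\prod_i(1+p_it)\,dt$, a split at $\tau$, a Gaussian approximation on $[0,\tau]$, and an exponential-decay bound on the tail. The tail argument is in fact the same computation in disguise, since the paper's $B(\tau)=\sum_i p_i^2\tau/(1+p_i\tau)$ is precisely your $-(\log F)'(\tau)$.

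The technical execution differs in two places worth noting. First, you replace the paper's Taylor expansion $\log(1+x)=x-x^2/2+\BigO{x^3}$ (whose uniform constant requires $p_it\le p_m\tau<1$, which is where condition~(A) enters in the paper) by the global two-sided inequalities $x-x^2/2\le\log(1+x)\le x-x^2/2+x^3/3$ on $x\ge 0$; this buys you an immediate lower bound $E(B)\ge\sqrt{\pi/(2\Mom_2)}$ without any splitting, and makes~(A) irrelevant on $[0,\tau]$. Second, in your version condition~(A) enters only in the tail, via $1+p_m\tau<2$ in the drift estimate. Both routes are equally valid; yours is slightly more elementary in that it avoids tracking uniformity of $\BigO{\cdot}$-constants.
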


  \subsubsection{Application to weighted distribution}

  \begin{prop}\label{prop:birthday}
    Let $\WGram{\W}$ be a weighted context-free grammar and $\Lang$ be its associated language, satisfying Conditions~\ref{cond:algebraicSing},~\ref{cond:logPosWeights}, and~\ref{cond:diversity}.
    Then the weighted distribution induced on $\Lang_n$ satisfies the conditions~\ref{cond:smallLargestProba},~\ref{cond:secondMoment} and~\ref{cond:thirdMoment} of Theorem~\ref{th:birthday} for any $\tau_n := \Mom_{k,n}$ such that $2<k<3$.
    Consequently the first collision is observed after $E[B\;|\;n]= \sqrt{\pi/2\Mom_{2,n}} (1+o(1))$ generations.
  \end{prop}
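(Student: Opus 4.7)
The plan is to verify the three hypotheses~\ref{cond:smallLargestProba}--\ref{cond:thirdMoment} of Theorem~\ref{th:birthday} for the distribution $p_w=\W(w)/\PF{\W}{n}$ on $\Lang_n$, with an appropriate $\tau_n$, and then invoke the theorem. Everything reduces to controlling the exponential order of $\Mom_{k,n}$ as a function of $k$.

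First, singularity analysis via~(\ref{eq:genexpansion}) applied to both $L_\W$ and $L_{\W^k}$, substituted into~(\ref{eq:weightedMoments}), gives
\[
\Mom_{k,n}\;=\;\frac{\PF{\W^k}{n}}{\PF{\W}{n}^{\,k}}\;\sim\;\frac{\kappa_{\W^k}}{\kappa_{\W}^{\,k}}\bigl(\rho_{\W}^{\,k}/\rho_{\W^k}\bigr)^{\!n}\,n^{\,k\,k_\W-k_{\W^k}}.
\]
Setting $\lambda_k := \rho_{\W^k}/\rho_{\W}^{\,k}$, Condition~\ref{cond:diversity} makes $\lambda_k>1$ for every $k>1$, so each $\Mom_{k,n}$ decays exponentially. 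Two sharper bounds will be needed. First, the elementary inequality $\Mom_{k+1,n}\le\PMax{\W}{n}\,\Mom_{k,n}$ combined with $\PMax{\W}{n}\le C\beta^{-n}$ from Condition~\ref{cond:algebraicSing} iterates to $\lambda_k\ge\beta^{k-1}$. Second, the Lyapunov inequality makes $\Mom_{k,n}^{1/k}$ non-increasing in $k$ with pointwise limit $\PMax{\W}{n}$ as $k\to\infty$; at the exponential scale this forces $\lambda_k^{1/k}\nearrow\beta$, and in particular yields the strict chain $\lambda_2^{1/2}<\lambda_3^{1/3}<\beta$.

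Second, I choose $\tau_n$ whose exponential rate $\alpha:=\lim_{n\to\infty}\tau_n^{1/n}$ is placed strictly inside the open window $(\lambda_2^{1/2},\lambda_3^{1/3})$, which is non-empty by the chain above. A concrete realization matching the spirit of the author's prescription is $\tau_n\asymp 1/\Mom_{k,n}^{1/k}$ for some $k\in(2,3)$, whose exponential rate equals $\lambda_k^{1/k}$, sitting strictly between $\lambda_2^{1/2}$ and $\lambda_3^{1/3}$. The three hypotheses then reduce to inequalities on $\alpha$: $\alpha<\beta$ gives~\ref{cond:smallLargestProba}, $\alpha>\lambda_2^{1/2}$ gives~\ref{cond:secondMoment}, and $\alpha<\lambda_3^{1/3}$ gives~\ref{cond:thirdMoment}. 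Theorem~\ref{th:birthday} then outputs $E[B_{n,\W}]=\sqrt{\pi/(2\Mom_{2,n})}(1+o(1))$, and~(\ref{eq:weightedMoments}) rewrites this as $\PF{\W}{n}\sqrt{\pi/(2\PF{\W^2}{n})}$.

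The delicate step is the strict monotonicity $\lambda_k^{1/k}\nearrow\beta$: the $n\to\infty$ exponential rates must be interchanged with the $k\to\infty$ Lyapunov limit, and it is here that Conditions~\ref{cond:algebraicSing} and~\ref{cond:diversity} genuinely have to cooperate via the chain $\lambda_{k+1}\ge\beta\lambda_k$. Condition~\ref{cond:logPosWeights} plays only a normalizing role, ensuring that $\W^k$ remains a valid positive weight vector for $k>1$.
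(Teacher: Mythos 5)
Your overall strategy matches the paper's: take $\tau_n$ of the form $\Mom_{k,n}^{-1/k}$ for some $k\in(2,3)$ and verify hypotheses~\ref{cond:smallLargestProba}--\ref{cond:thirdMoment} by comparing exponential growth rates. The gap sits exactly in the step you flag as delicate: the \emph{strict} inequality $\lambda_2^{1/2}<\lambda_3^{1/3}$, i.e.\ $\sqrt{\rho_{\W^2}}<\sqrt[3]{\rho_{\W^3}}$, without which the window for the rate of $\tau_n$ is empty. Neither mechanism you invoke delivers strictness. Lyapunov gives only non-strict monotonicity of $\Mom_{k,n}^{1/k}$, hence only $\lambda_2^{1/2}\le\lambda_3^{1/3}$ after passing to exponential orders; and the chain $\lambda_{k+1}\ge\beta\lambda_k$ yields $\lambda_3^{2}\ge\beta^{2}\lambda_2^{2}$, which exceeds $\lambda_2^{3}$ only if $\lambda_2<\beta^{2}$ -- but the matching bound $\lambda_2\le\beta_0^{2}$ (with $\beta_0$ the exact decay rate of $\PMax{\W}{n}$, which your argument silently identifies with the arbitrary constant $\beta$ of Condition~\ref{cond:algebraicSing}) combines to give $\lambda_3^{2}\ge\lambda_2^{3}$ with possible equality. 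Equality genuinely occurs when the heaviest word dominates $\PF{\W^2}{n}$ and $\PF{\W^3}{n}$ at exponential order: for a distribution with one word of weight $W^n$ against $A^n$ words of weight $1$, $1<W<A<W^{2}$, Condition~\ref{cond:algebraicSing} holds for $\W$ and $\rho_{\W}^{k}<\rho_{\W^k}$ holds for $\W$ itself, yet $\lambda_2^{1/2}=\lambda_3^{1/3}=\beta_0$, every candidate $\tau_n$ gives $\sqrt{\Mom_{2,n}}\,\tau_n\asymp 1$, and hypothesis~\ref{cond:secondMoment} fails.

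The paper closes this gap by exploiting that Condition~\ref{cond:diversity} is quantified over \emph{all} weight vectors satisfying Condition~\ref{cond:logPosWeights}, not just $\W$: applying it to the auxiliary vector $\W_0=\W^{2}$ with exponent $k=3/2$ gives $\rho_{\W^2}^{3/2}<\rho_{(\W^2)^{3/2}}=\rho_{\W^3}$, i.e.\ exactly the strict inequality you need, and more generally $\sqrt[a]{\rho_{\W^a}}<\sqrt[b]{\rho_{\W^b}}$ for all $1<a<b$. (In the example above, Condition~\ref{cond:diversity} fails for $\W^{2}$, which is why it is excluded.) Two smaller points: your full singularity expansion of $\Mom_{k,n}$, with constants $\kappa_{\W^k}$ and exponents $k_{\W^k}$, presupposes a unique dominant singularity for every $L_{\W^k}$, whereas the paper deliberately works only at the level of exponential order; and for hypothesis~\ref{cond:smallLargestProba} the paper uses the exact, non-asymptotic inequality $\PMax{\W}{n}\cdot\Mom_{k,n}^{-1/k}=\WMax{\W}{n}/\PF{\W^k}{n}^{1/k}<1$, which is more robust than a comparison of exponential rates $\alpha<\beta$ that could degenerate to equality.
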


\subsection{Coupon collector: Waiting for the full collection}
  First let us remind that the uniform case is covered by the following \emph{folklore} theorem~\cite{FlaGarThi92}.
  \begin{thm}
    In the uniform distribution, the waiting time $E[C_{{\bf 1}}]$
    is given by
    \begin{equation}
       E[C_{{\bf 1}}] = m\cdot\mathcal{H}_{m} \in \Theta(m\cdot\log(m)).
    \end{equation}
  \end{thm}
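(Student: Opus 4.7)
The plan is to decompose the waiting time $C_{{\bf 1}}$ into a telescoping sum of sub-waiting times indexed by the number of distinct coupons already collected, which is the classical Feller-style argument. Specifically, I would let $T_i$ denote the number of additional draws needed to acquire the $i$-th new coupon once $i-1$ distinct ones have been encountered, so that $C_{{\bf 1}} = T_1 + T_2 + \cdots + T_m$.

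Next I would observe that, under the uniform distribution on $m$ coupons, once exactly $i-1$ distinct coupons have been collected the probability that the next draw yields a new one equals $(m-i+1)/m$, independently of previous draws. Hence $T_i$ is geometric with this success probability, and $E[T_i] = m/(m-i+1)$. By linearity of expectation,
\begin{equation*}
E[C_{{\bf 1}}] = \sum_{i=1}^{m} \frac{m}{m-i+1} = m \sum_{j=1}^{m} \frac{1}{j} = m \cdot \Harm{m}.
\end{equation*}
To conclude the asymptotic bound $m \Harm{m} \in \Theta(m \log m)$, I would invoke the standard harmonic number expansion $\Harm{m} = \log m + \gamma + O(1/m)$, with $\gamma$ the Euler--Mascheroni constant.

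There is no genuine obstacle here, since the result is a textbook computation: linearity of expectation bypasses any need to argue about the joint distribution of the $T_i$, and the only analytic input required is the asymptotic behaviour of harmonic numbers. The sole point worth flagging explicitly is the independence and memorylessness that justify treating each $T_i$ as a fresh geometric trial, which follows from the i.i.d.\ nature of the underlying draws together with the fact that the identity of the ``missing'' coupons is irrelevant under the uniform measure.
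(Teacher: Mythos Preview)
Your argument is correct and is the standard textbook derivation. Note, however, that the paper does not actually prove this statement: it is introduced as a ``\emph{folklore} theorem'' with a citation to~\cite{FlaGarThi92}, and no proof is given in the text. Your Feller-style decomposition into geometric increments $T_i$ is exactly the classical justification behind that folklore result, so there is nothing to compare against and nothing to fix.
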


  \begin{thm}
    In a non-uniform distribution and for large values of $n$, the waiting time $E[C_{\W}]$ of the full collection obeys
    \begin{equation}
      \frac{1}{p_1} \le E[C_{\W}] \le 2\cdot \Harm{m}\cdot\frac{1}{p_1}\label{eq:couponCollector}
    \end{equation}
    where $p_1$ is the smallest probability of an urn.
  \end{thm}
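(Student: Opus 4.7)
The plan is to establish the two bounds of~\eqref{eq:couponCollector} separately, using only elementary facts about geometric waiting times and a union bound.

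\textbf{Lower bound.} Let $T_\star$ denote the first index at which the rarest urn---the one of probability $p_1$---is drawn. Since the successive draws are i.i.d., $T_\star$ is geometric with parameter $p_1$ and $E[T_\star] = 1/p_1$. The full collection cannot occur before this particular urn has been hit at least once, so $C_\W \ge T_\star$ almost surely and consequently $E[C_\W] \ge 1/p_1$.

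\textbf{Upper bound.} Write $C_\W = \max_{1\le i\le m} T_i$ where $T_i$ is the first hitting time of urn $i$. By a union bound,
\[
P(C_\W > t) \;\le\; \sum_{i=1}^m P(T_i > t) \;=\; \sum_{i=1}^m (1-p_i)^t \;\le\; m(1-p_1)^t.
\]
Using $E[C_\W] = \sum_{t \ge 0} P(C_\W > t)$ together with the trivial bound $P(C_\W > t) \le 1$, I would split the sum at the threshold $t_0 := \lceil \log(m)/p_1\rceil$, bounding the head by $t_0$ and the tail by the geometric series
\[
\sum_{t \ge t_0} m(1-p_1)^t \;=\; \frac{m(1-p_1)^{t_0}}{p_1}.
\]
The elementary inequality $\log(1/(1-p_1)) \ge p_1$ ensures $m(1-p_1)^{t_0} \le 1$, so the tail contributes at most $1/p_1$. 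Combining yields $E[C_\W] \le \log(m)/p_1 + 1 + 1/p_1$, and bounding $\log m \le \Harm{m}$ and absorbing the two additive constants into the factor $2$ (valid as soon as $\Harm{m}/p_1 \ge 1$, hence for any non-degenerate instance) gives the claimed bound $E[C_\W] \le 2\,\Harm{m}/p_1$.

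\textbf{Main obstacle.} There is no genuine obstacle; the argument is standard folklore. The only delicate point is the choice of split $t_0$ so that both the head and the tail end up of order $\Harm{m}/p_1$, and a quick check that the constant $2$ in the statement is enough to absorb the two $O(1/p_1)$ additive terms. An entirely equivalent route would apply Markov's inequality to the number of still-uncovered urns after $t$ draws, whose expectation is exactly $\sum_i (1-p_i)^t \le m(1-p_1)^t$, and then integrate the resulting tail bound in the same way.
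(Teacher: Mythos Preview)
Your argument is correct and, for the lower bound, identical to the paper's. For the upper bound, however, you take a genuinely different and more elementary route. The paper does not argue directly: it invokes a result of Berenbrink and Sauerwald stating that $E[C_\W]\le 2\,\Est_m$ with $\Est_m:=\sum_{i=1}^m 1/(i\,p_i)$, and then observes that $\Est_m\le \Harm{m}/p_1$ since each $p_i\ge p_1$. Your union-bound-plus-tail-sum argument bypasses this external reference entirely and is self-contained; what the paper's detour buys is the intermediate quantity $\Est_m$, which is strictly sharper than $\Harm{m}/p_1$ and is in fact used numerically later in the paper (the Motzkin experiment plots $p_1\Est_m$).

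One small bookkeeping point: the condition you state for absorbing the additive terms, ``$\Harm{m}/p_1\ge 1$'', is not quite the right one. What you actually need is $1+1/p_1\le \Harm{m}/p_1$, i.e.\ $1+p_1\le \Harm{m}$; since $p_1\le 1/m$ this holds for all $m\ge 2$, so the conclusion is unaffected.
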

  \begin{proof}

  First let us point out that, for any urn $u$, the waiting time of the full collection is greater than the expected
  time when a first ball reaches $u$. Since the least probable urn has probability $p_1$,
  then the lower bound on $E[C_{\W}]$ immediately follows.

  From a recent contribution by Berenbrink and Sauerwald~\cite{Berenbrink2009}, we know that the waiting time $E[C_{\W}]$
  for the full collection of $m$ items drawn with respective probabilities $p_1\le p_2\le\ldots\le p_m$
  can be approximated within a $\BigO{\log \log m}$ factor by an estimate
  \begin{equation} \Est_m = \sum_{i=1}^m \frac{1}{i p_i}.\label{eq:estimateCouponCollector}\end{equation}
  More precisely it is shown in~\cite{Berenbrink2009} that
  \begin{equation} \frac{\Est_m}{3e\cdot\log \log m} \le E[C_{\W}] \le 2 \Est_m.\end{equation}

  In our urn model, equation~\ref{eq:estimateCouponCollector} specializes into
  \begin{align*}
    \Est_{m} & = \sum_{i=1}^m \frac{1}{i p_i} = \frac{1}{p_1}\left(\sum_{i=1}^m \frac{1}{i \Delta_i}\right)
  \end{align*}
  where $\D_{i}:=p_i/p_1$. Since $p_1$ is the weight of the least probable urn, then one has $\D_{i}\ge 1,\; \forall i\in[1,m]$, and therefore the following upper bound holds
  \begin{align*}
    \Est_{m} \le \frac{1}{p_1}\left(\sum_{i=1}^m \frac{1}{i}\right) = \frac{1}{p_1}\cdot\Harm{m}
  \end{align*}
  in which one recognizes the upper bound of Equation~\ref{eq:couponCollector}.
  \end{proof}
  Experiments suggest that the upper bound is in fact reached. For instance, Figure~\ref{fig:CouponBerenbrink} shows the value
  $p_1\cdot\Est_{m}$ for weighted Motzkin paths, where a weight $W>1$ is associated to horizontal steps, while up and down steps remain unweighted.
  In such a case the growth of $p_1\cdot\Est_{m}$ appears to be linear with different slopes depending on the parity of $n$.
  This phenomenon is due to the fact that the minimal number of horizontal steps in a Motzkin word of length $n$ is $0$ (resp. $1$) for even (resp. odd) lengths,
  leading to minimal weights of $1$ for even lengths and $\W$ to odd ones.

  \begin{figure}
  \includegraphics[width=25em]{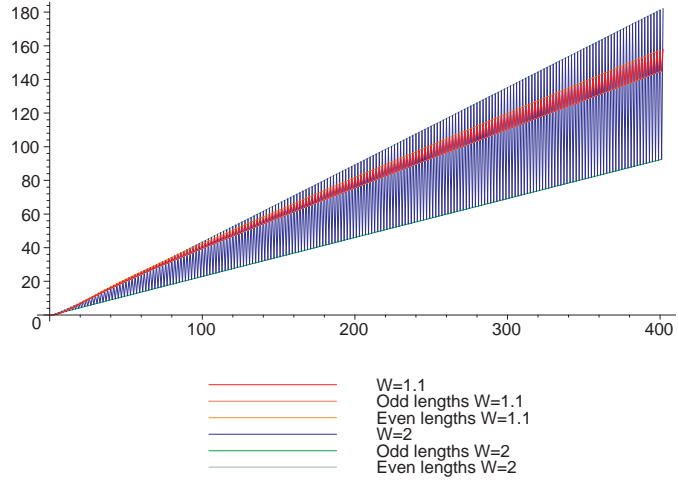}
  \caption{Plots of $p_1\cdot\Est_{m}$ for weighted Motzkin words exhibit a linear growth  on $n$, suggesting that the upper bound is reached.}\label{fig:CouponBerenbrink}
  \end{figure}
\subsection{Occupancy analysis}

   Figuring out the average state after $k$ generations turns out to be easier that the inverse problem -- finding expected number $k$ of generations
   before a given state is observed. We refer to a survey~\cite{Gardy2002} by one of the authors for examples of urns model in the context of the analysis of algorithms.
   Here we establish a general formula for the cumulated weight in a weighted urn model through a generating function analysis.
  \begin{thm}
    The total weight $E[W_{k}]$ of occupied urns after throwing $k$ balls is given by
    \begin{equation}E[W_{k}] = \sum_{i=1}^{m} \CW_{i}\cdot\left(1-\left(1-p_{i}\right)^k\right).\label{eq:occupancy}\end{equation}
  \end{thm}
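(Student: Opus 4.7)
The plan is to proceed by linearity of expectation. For each urn $u_i$, I introduce an indicator random variable $X_i$ equal to $1$ if $u_i$ receives at least one ball among the $k$ throws, and $0$ otherwise. The total weight of occupied urns after $k$ throws then decomposes as $W_k = \sum_{i=1}^m \CW_i X_i$, so taking expectations and applying linearity immediately gives $E[W_k] = \sum_{i=1}^m \CW_i \cdot \Prob{X_i = 1}$.

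It then remains to compute the marginal $\Prob{X_i=1}$. Since the $k$ throws are independent and each individual throw misses urn $u_i$ with probability $1-p_i$, the probability that all $k$ throws miss $u_i$ is $(1-p_i)^k$. Consequently $\Prob{X_i = 1} = 1-(1-p_i)^k$, and substituting back into the sum recovers Equation~\eqref{eq:occupancy}.

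Consistent with the author's announced generating function viewpoint, the same identity can also be derived as follows: the number of balls landing in $u_i$ after $k$ independent throws is Binomial$(k,p_i)$ distributed, with probability generating function $(1-p_i+p_iz)^k$. Evaluating at $z=0$ yields $(1-p_i)^k$ as the probability of the urn being empty, and weighting the complement by $\CW_i$ before summing over $i$ produces exactly the claimed closed form.

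The proof presents no substantial obstacle. The only conceptual subtlety worth flagging is that the indicators $X_i$ are strongly correlated, since their sum counts the number of distinct urns hit and is globally constrained by $k$; but because we only require marginal probabilities of single urns, linearity of expectation sidesteps any need to analyze the joint distribution of $(X_1,\dots,X_m)$.
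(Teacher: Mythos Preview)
Your proof is correct, but the paper takes a different route. Rather than isolating each urn with an indicator and applying linearity of expectation, the authors work with a single bivariate generating function
\[
\Psi_{\W}(x,y)=\prod_{i=1}^{m}\left(1+x^{\CW_{i}}\left(e^{p_{i}y}-1\right)\right),
\]
where $y$ marks the number of balls (exponentially) and $x$ marks the cumulated weight of occupied urns. They obtain $E[W_k]$ by differentiating in $x$ at $x=1$ and then extracting the coefficient of $y^k/k!$. Your argument is strictly more elementary: it needs only independence of throws and linearity of expectation, and it avoids any coefficient extraction. The paper's approach, by contrast, encodes the full joint law of (number of balls, occupied weight) in one object, which is the natural currency of the analytic-combinatorics framework used elsewhere in the article and would make higher moments or refined statistics accessible without new ideas. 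Note also that the generating-function sketch you offer at the end (per-urn binomial PGFs evaluated at $z=0$) is not the paper's construction; theirs is a global product EGF rather than a per-urn marginal computation, so even that paragraph does not coincide with their argument.
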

  \begin{proof}
    Consider the bivariate generating function $$\Psi_{\W}(x,y) =  \sum_{j\ge 0} \sum_{k\ge 0} a_{j,k}\cdot x^j\cdot \frac{y^k}{k!}$$
    where $a_{j,k}$ is the probability of reaching a set of urns having cumulated weight equal to $j$ upon throwing $k$ balls.
    Remark that such random allocations can be reinterpreted as sequences of $m$ urns, each urn $u_i$ containing either at non-empty
    set of balls (associated with a $x^{\CW_{i}}(e^{p_{i}y}-1)$ contribution) or no ball ($y^0=1$). Consequently the generating function $\Psi_{\W}(x,y)$ can be reformulated as
    $$ \Psi_{\W}(x,y) = \prod_{i=1}^{m}\left(1+x^{\CW_{i}}\left( e^{p_{i} y} -1 \right)\right).$$
    The generating function for the expectation of weight is then classically obtained through a partial derivative on $x$.
         \begin{align*}
   E[W_{k}] &
   = \left[\frac{y^k}{k!}\right]\frac{\partial \Psi_\W(x,y)}{\partial z}(1,y) = \left[\frac{y^k}{k!}\right] e^{-y}\sum_{i=1}^{m} \CW_{i}\cdot\left(1-e^{-y\,p_{i}}\right)\\
   & = \sum_{i=1}^{m}\cdot\left(\left[\frac{y^k}{k!}\right] e^y - \left[\frac{y^k}{k!}\right]e^{y\,(1-p_{i})}\right) = \sum_{i=1}^{m} \CW_{i}\cdot\left(1 - (1-p_{i})^k\right)
   \end{align*}
  \end{proof}
   Remark that, upon setting $\CW_{i} = 1$, Equation~\ref{eq:occupancy} simplifies into $E[N_{k}]$ of urns reached by at least one ball (cf Hwang and Janson~\cite{Hwang2008}), such that
    \begin{equation*}E[N_{k}] = \sum_{i=1}^{m} \left(1-\left(1-p_{i}\right)^k\right) = \sum_{i=1}^{m} \left(1-e^{-p_{i}k}\right) + \BigO{1}\end{equation*}

\subsubsection{Asymptotic estimates for the coverage}
Let us start from the formula
\[
E[W_{k}] = \sum_{i=1}^{m} \CW_{i}\cdot\left(1 - (1-p_{i})^k\right) = \sum_{i=1}^{m} \CW_{i}\cdot\left(1 - e^{k\cdot\log(1-p_i)}\right).
\]
Since $p_i <1$ for all $i\in[1,m]$, then one can use an approximation $\log(1-p_i) = - p_i + \BigO{p_i^2}$
for large values of $m$, which can be be injected into $E$ to obtain
\[
E[W_{k}] =  \sum_{i=1}^{m} \CW_{i}\cdot\left(1 - e^{k\left(- p_i + \BigO{p_i^2}\right)}\right).
\]

If $k\cdot p_m \in o(1)$, then $k\cdot p_i \le k\cdot p_m \in o(1)$ for all $i\in[1,m]$,
and therefore $e^{k (-p_i +  \BigO{p_i^2}) } = 1 - k p_i  + \BigO{kp_i^2},$ which  gives
\begin{equation}
E[W_{k}] = \sum_{i=1}^{m} \CW_{i}\left(  k p_i  + \BigO{kp_i^2} \right)
= k \sum_{i=1}^{m} \CW_i p_i + \BBigO{k \sum_{i=1}^{m}\CW_i\cdot p_i^2}.\label{eq:coverage}
\end{equation}
In weighted languages that satisfy Condition~\ref{cond:algebraicSing}, there exists $\beta>1$ such that $p_i \in \BigO{\beta^{-n}}$, for all $i\in[1,m]$.
Consequently, for any $k\in o(\beta^n)$, one has
$$E[W_{k}] = k \sum_{i=1}^{m} \CW_i p_i \left( 1 + \BigO{\beta^{-n}}\right) = k\cdot\PF{\W}{n}\cdot \Mom_{2,n} \left( 1 + \BigO{\beta^{-n}}\right).$$

%

\section{Application to the statistical sampling of RNA}\label{sec:rna}

\pgfdeclarelayer{background}
\pgfdeclarelayer{foreground}
\pgfsetlayers{background,main,foreground}
\begin{figure}
\tikzstyle{st}=[rectangle,draw=gray!80,fill=white]
\usetikzlibrary{shadows}
\begin{tikzpicture}[scale=1,node distance=0mm, inner sep=0pt]
  \node  at (0,0) {\includegraphics[width=130pt]{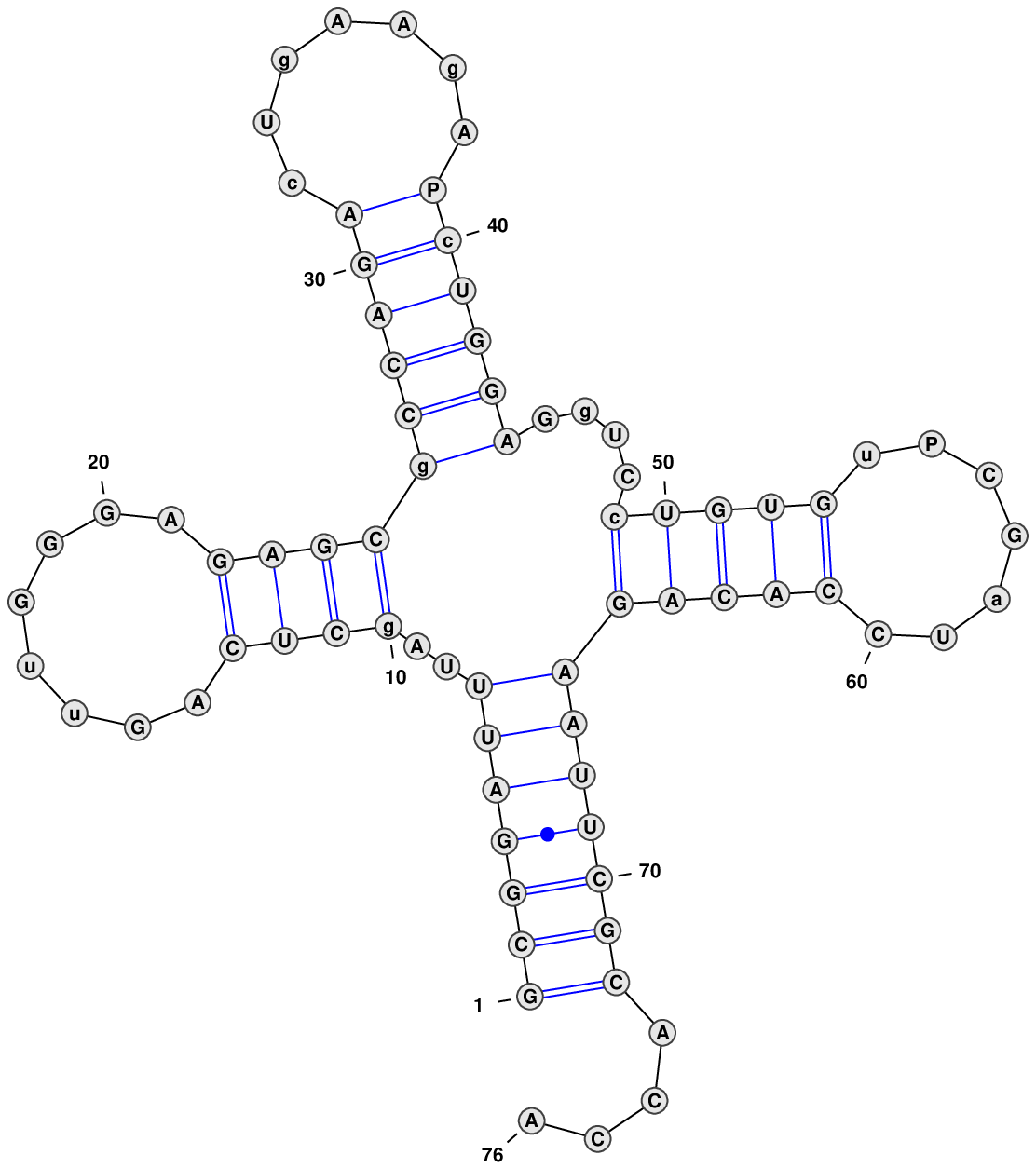}};
  \node  at (7.0,-1.7) {\includegraphics[width=220pt]{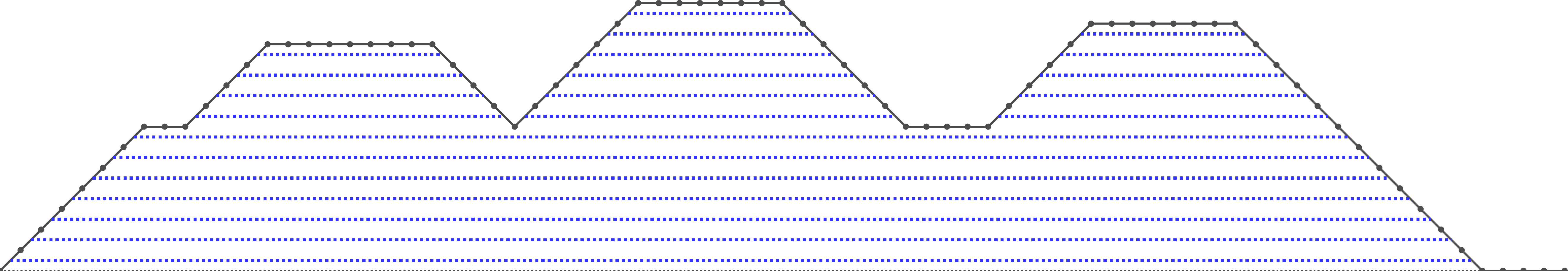}};
  \node[st]  at (4.2,1.5) (s1)     {\includegraphics[width=25pt]{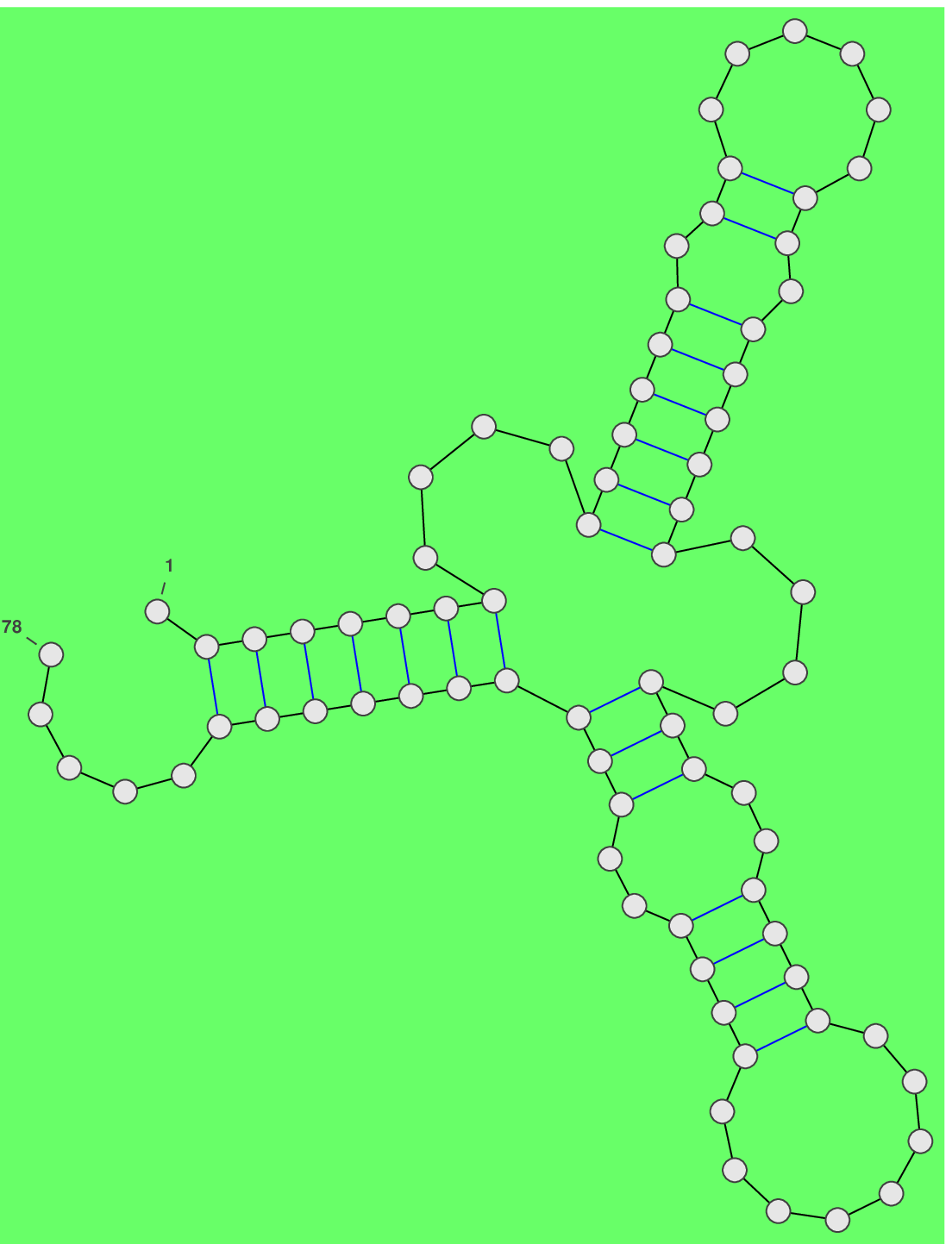}};
  \node[st,right=5pt of s1]  (s2)  {\includegraphics[width=25pt]{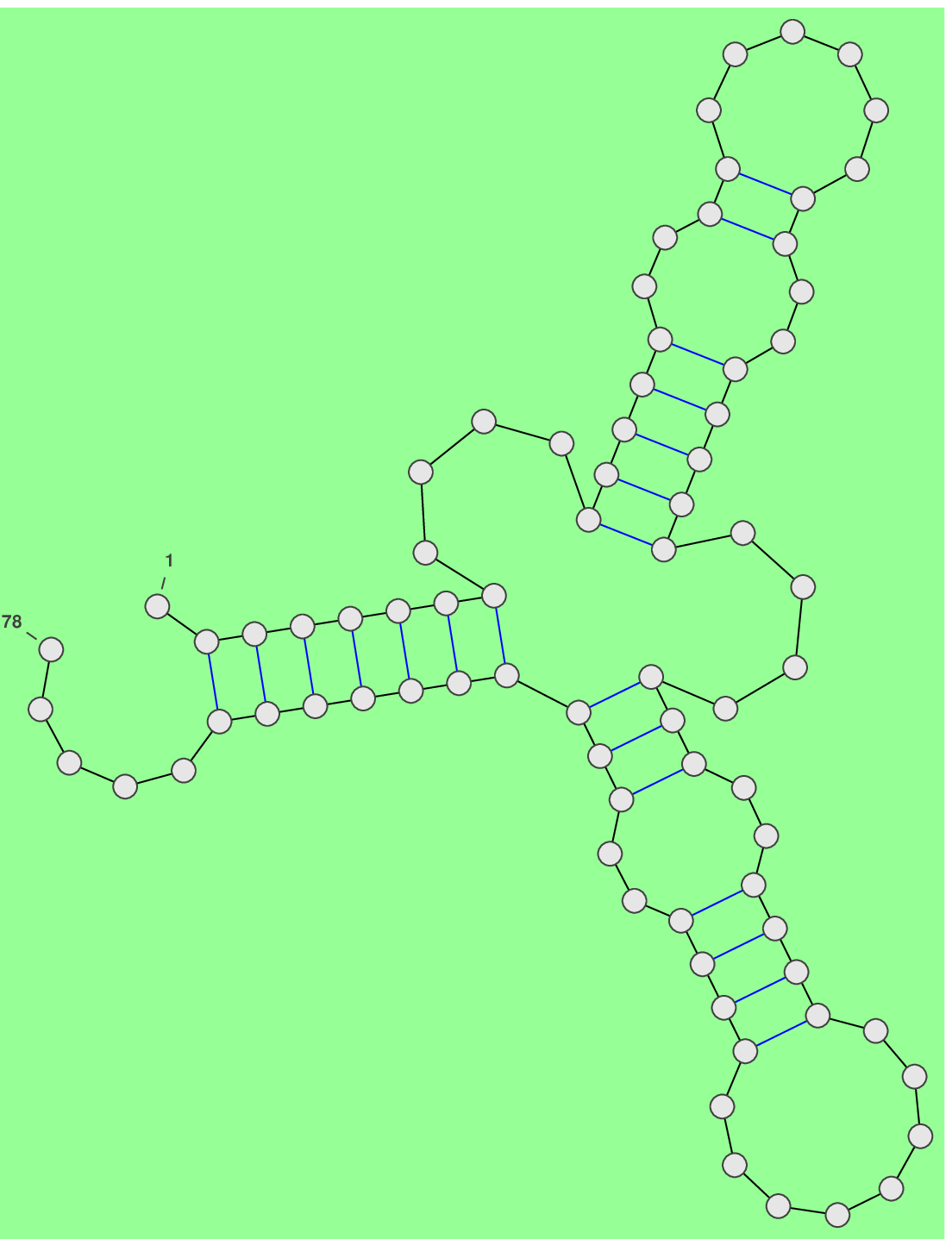}};
  \node[st,right=5pt of s2]  (s3)  {\includegraphics[width=25pt]{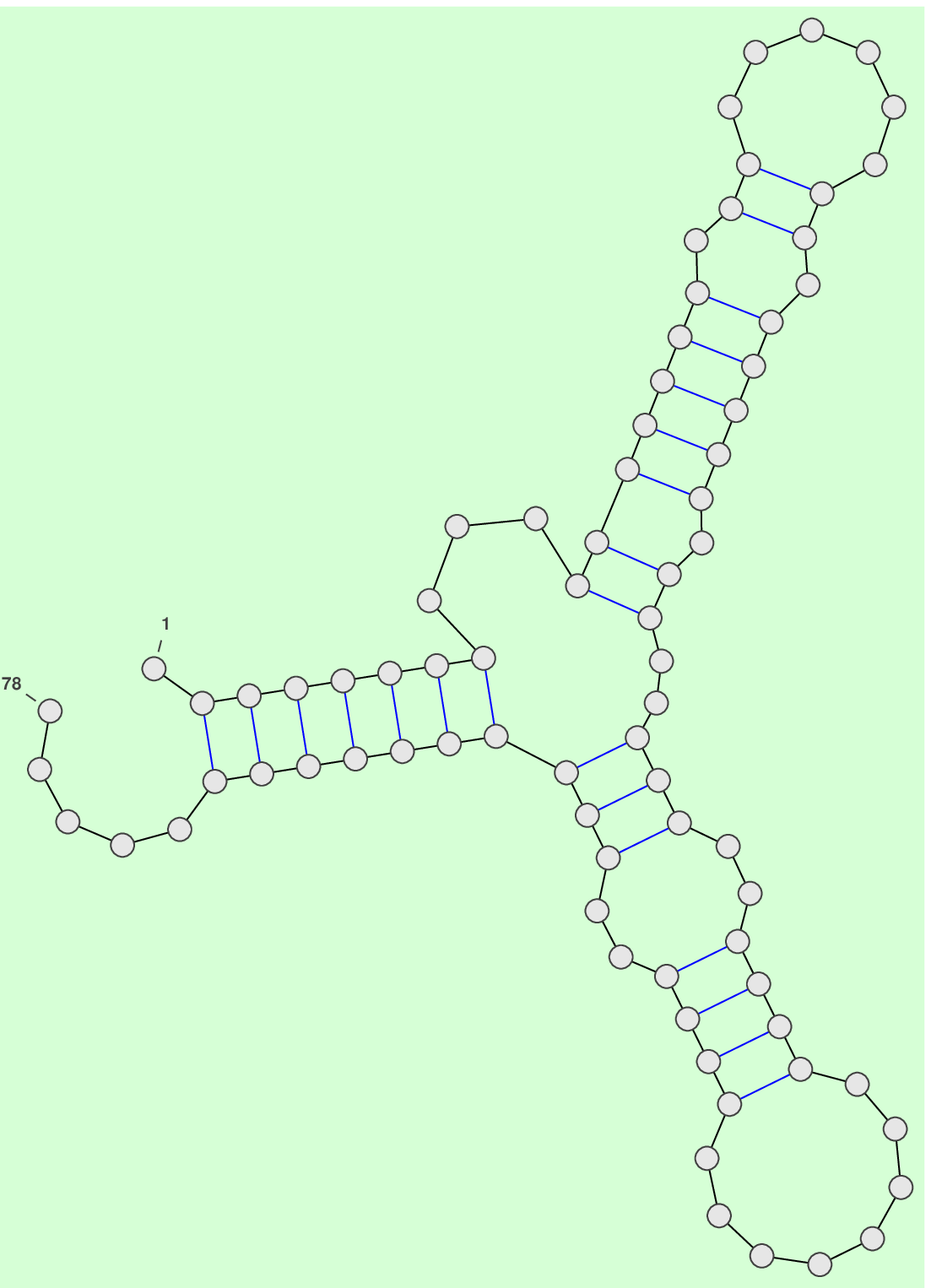}};
  \node[st,right=5pt of s3]  (s4)  {\includegraphics[width=25pt]{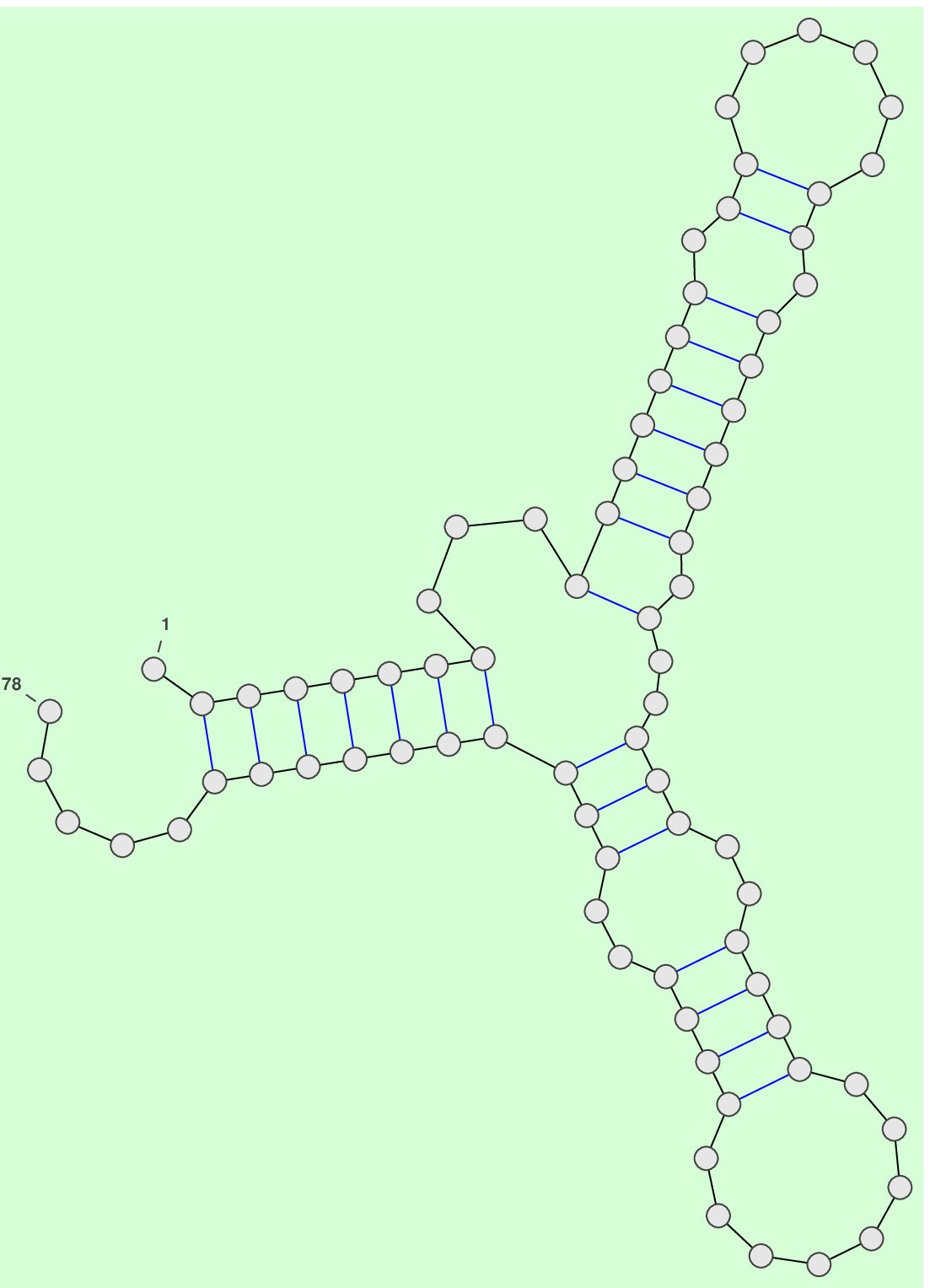}};
  \node[st,right=5pt of s4]  (s5)  {\includegraphics[width=25pt]{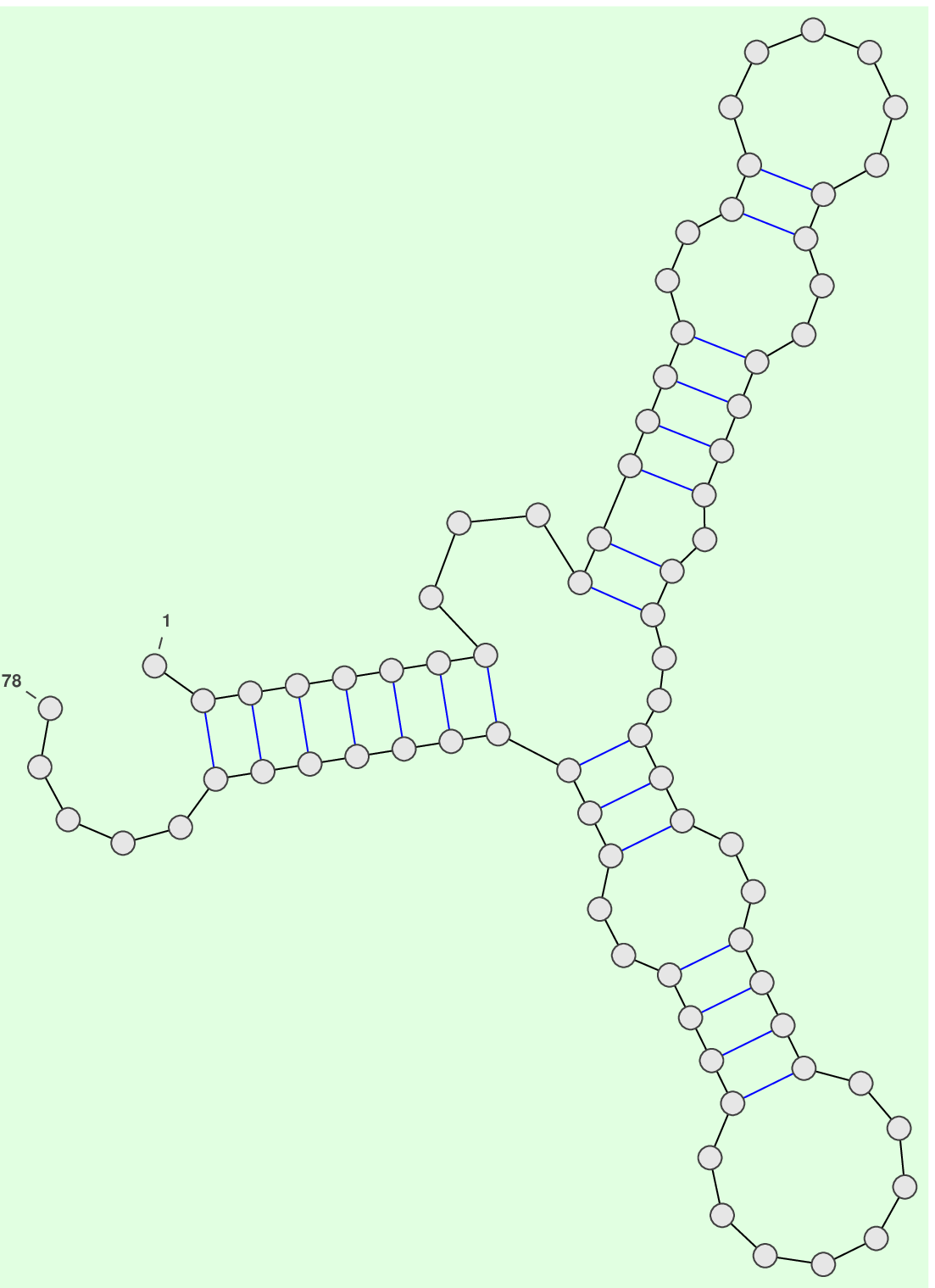}};
  \node[st,right=5pt of s5]  (s6)  {\includegraphics[width=25pt]{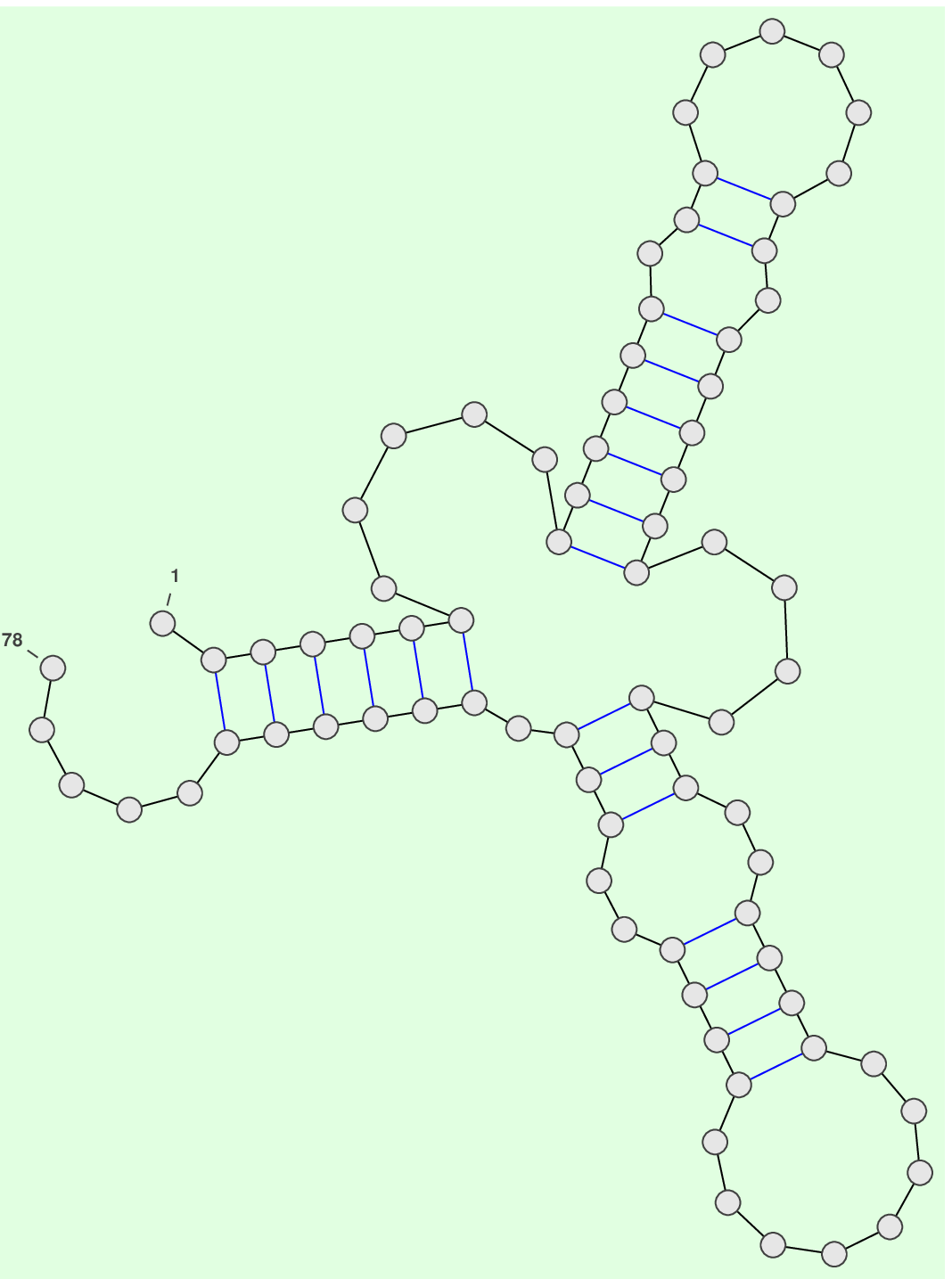}};
  \node[st,below=5pt of s1]  (s7)  {\includegraphics[width=25pt]{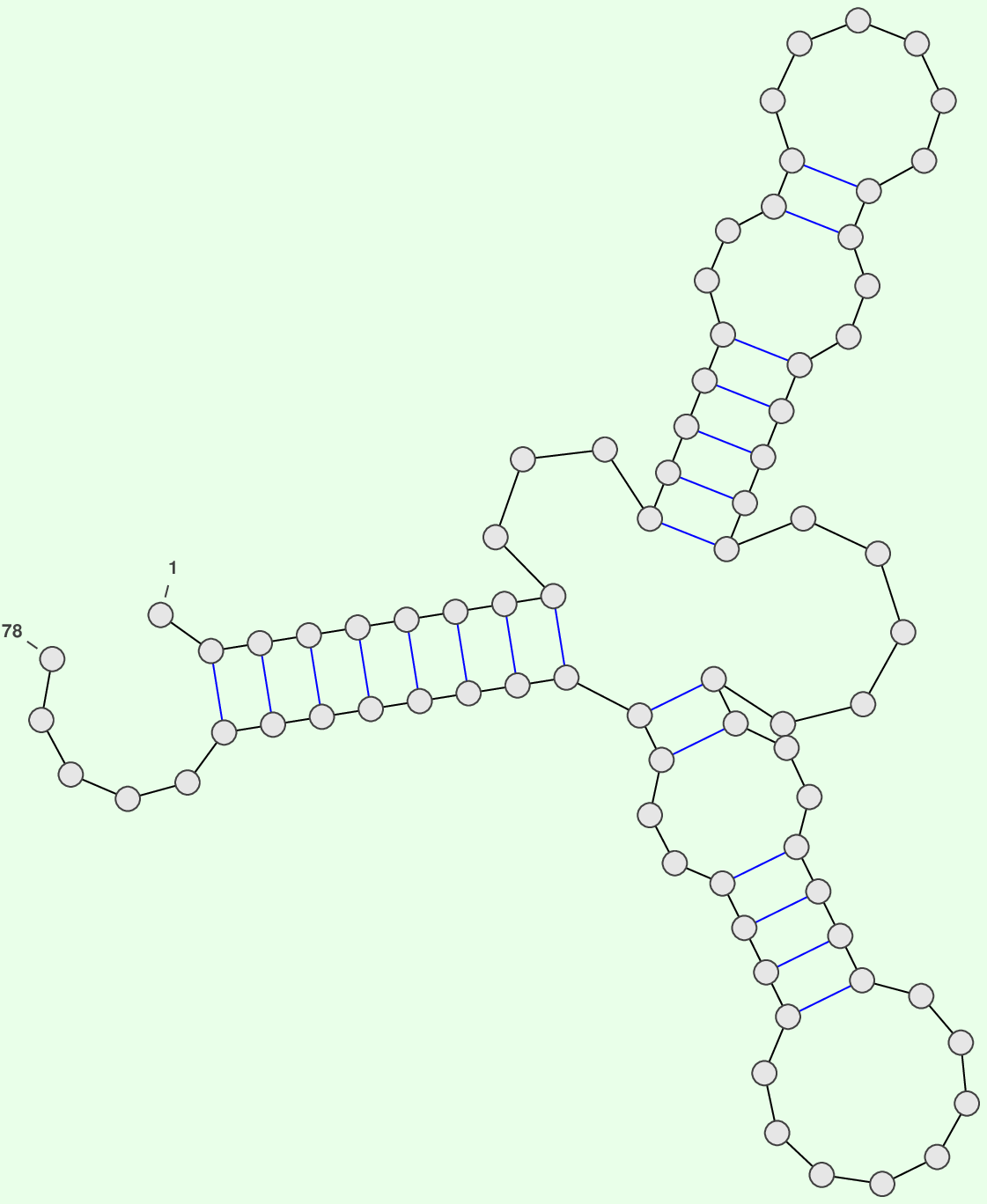}};
  \node[st,right=5pt of s7]  (s8)  {\includegraphics[width=25pt]{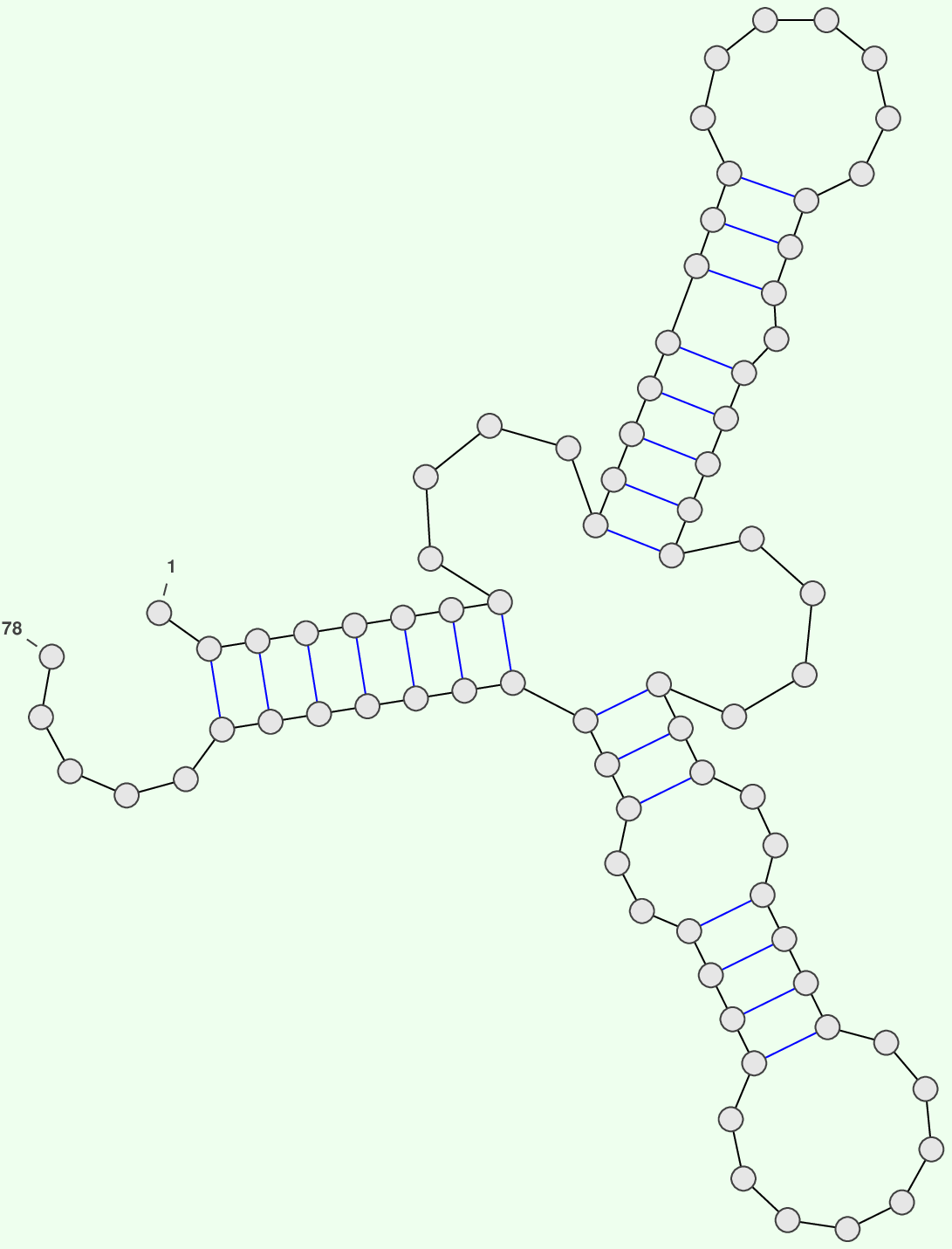}};
  \node[st,right=5pt of s8]  (s9)  {\includegraphics[width=25pt]{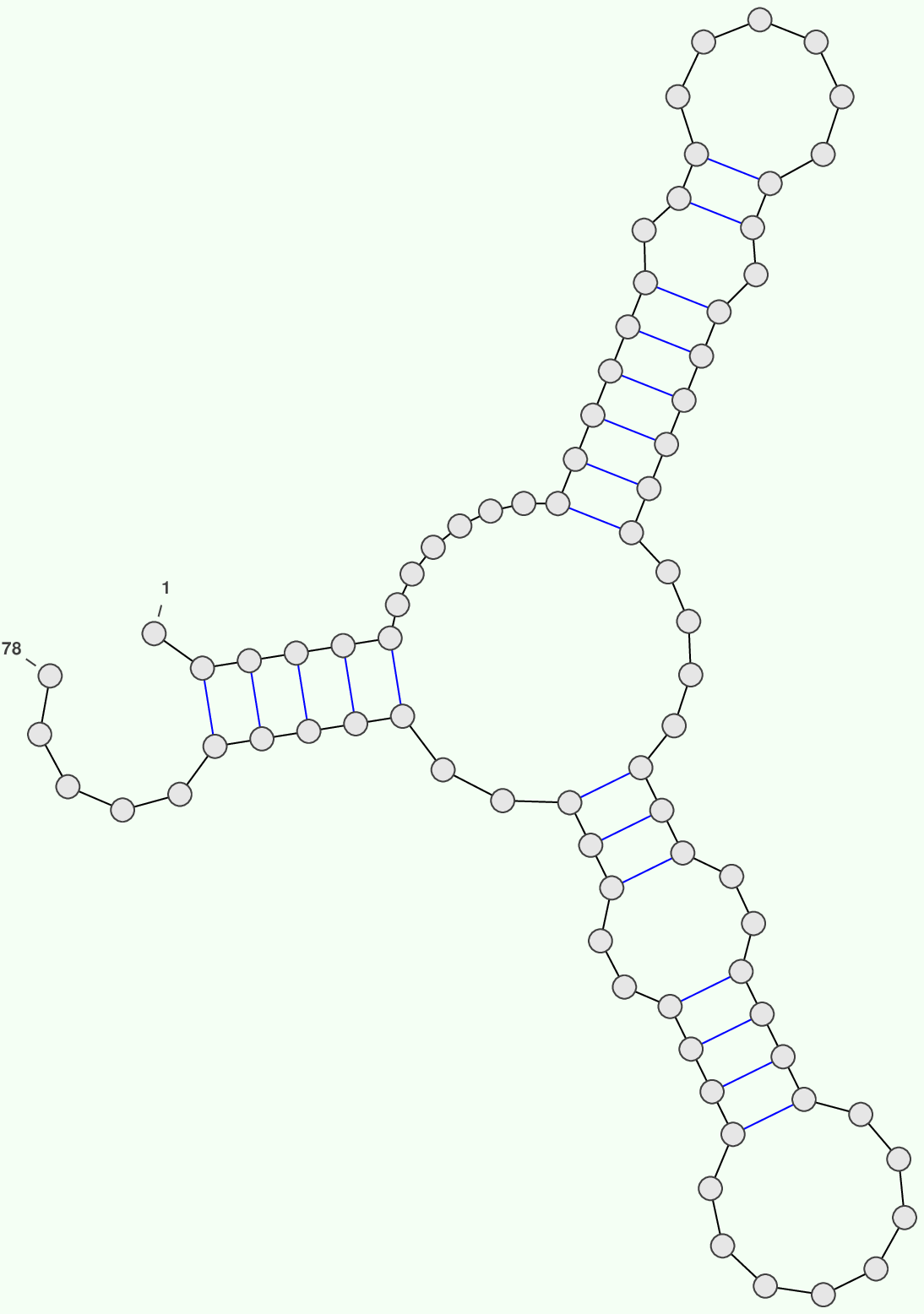}};
  \node[st,right=5pt of s9]  (s10) {\includegraphics[width=25pt]{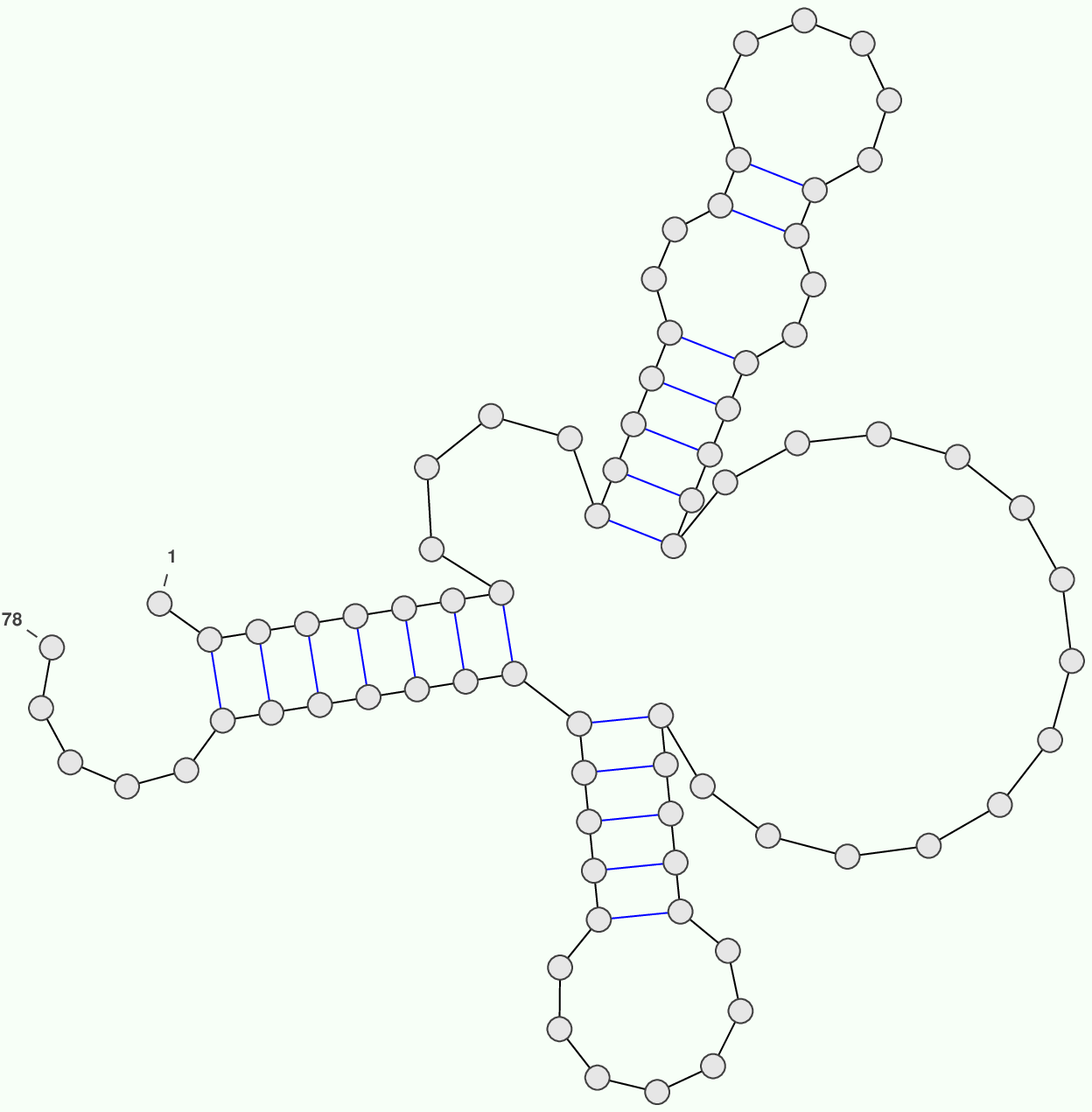}};
  \node[st,right=5pt of s10] (s11) {\includegraphics[width=25pt]{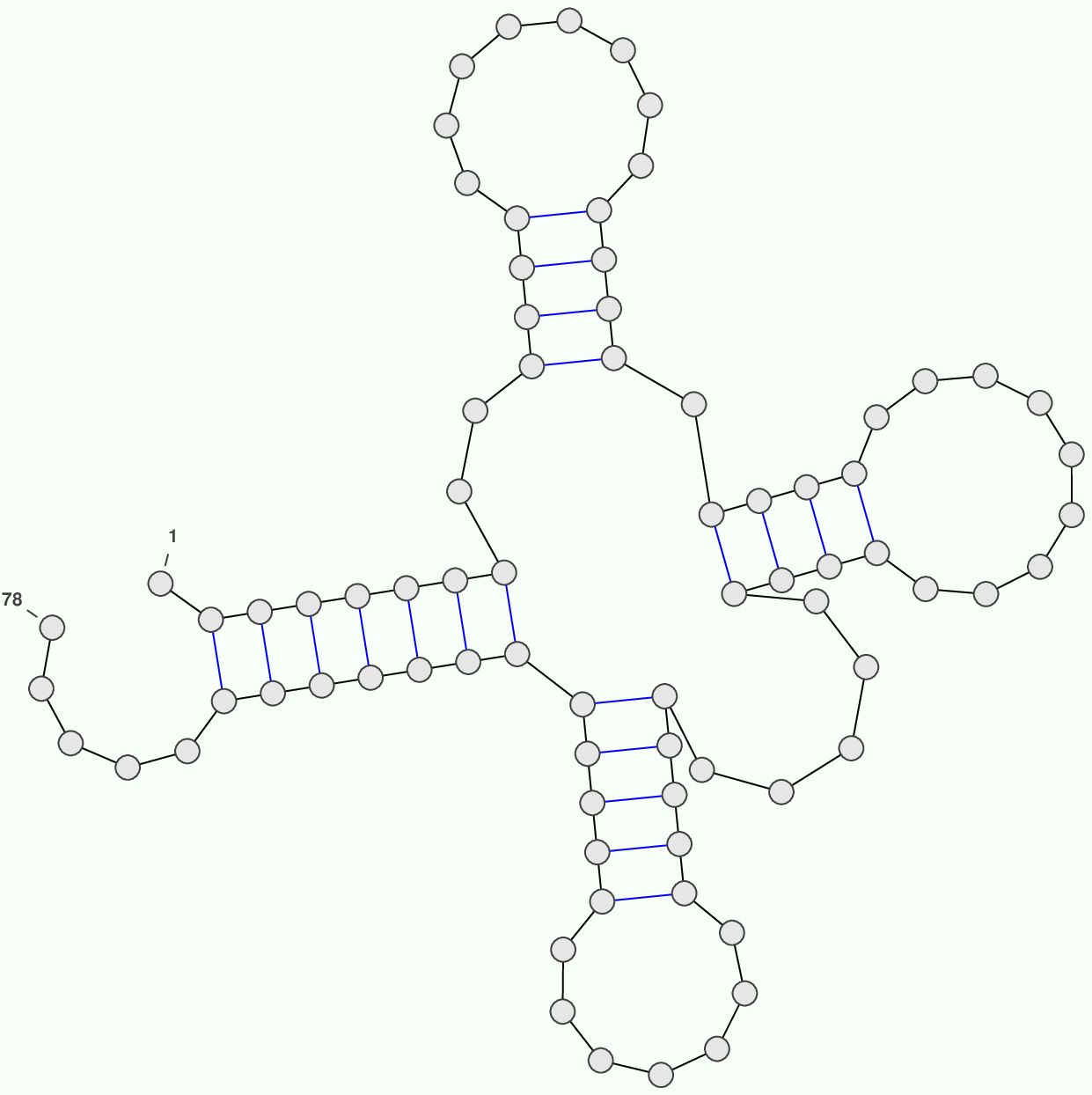}};
  \node[st,right=5pt of s11] (s12) {\includegraphics[width=25pt]{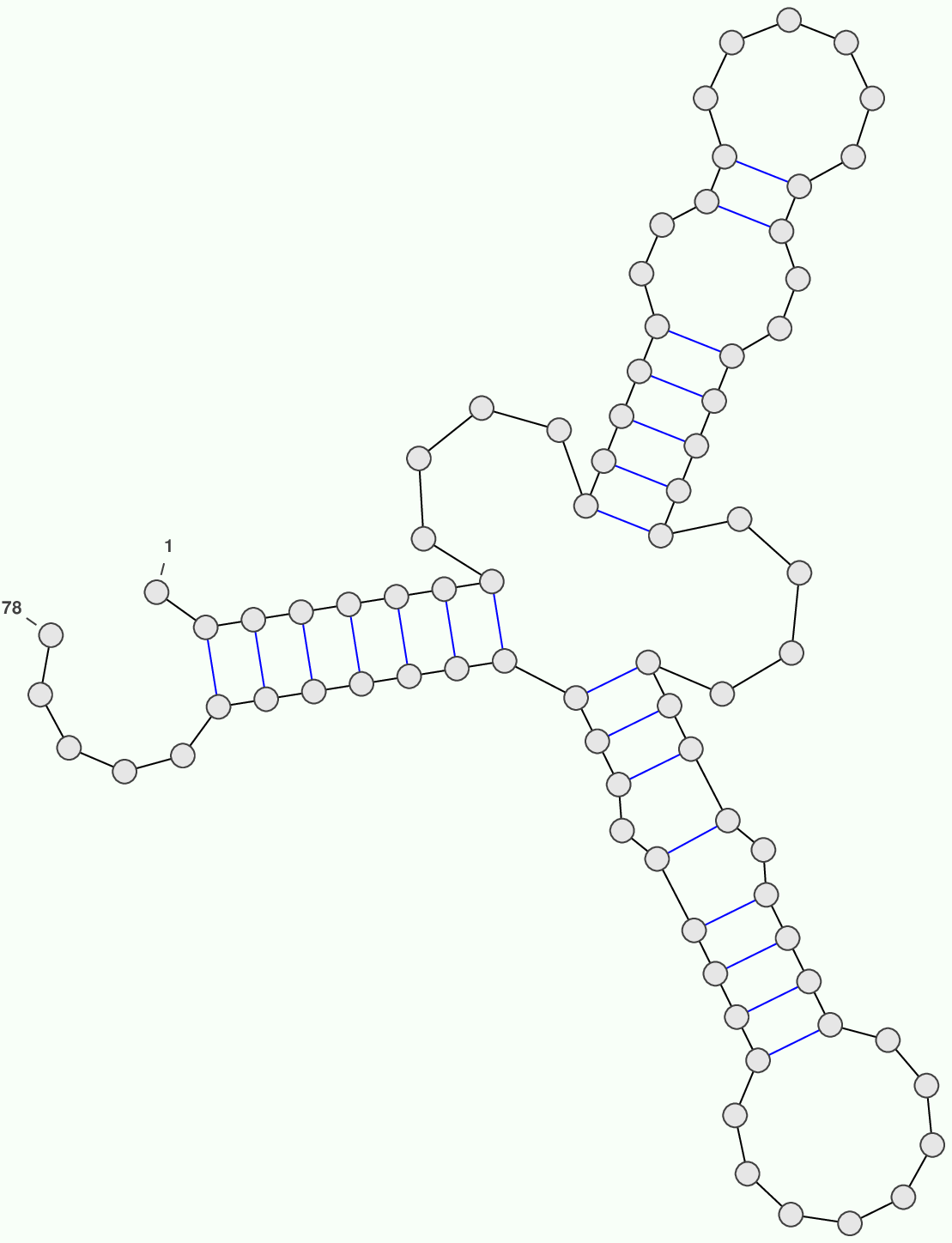}};
\begin{pgfonlayer}{background}
  \node[fill=white,inner sep =5pt, draw=gray, fit=(s1) (s6) (s12)] {};
\end{pgfonlayer}
  \end{tikzpicture}
  \caption{Secondary structure (Left) of a transfer RNA (tRNA) and its equivalent representation as a Motzkin walk (Bottom-right). Top-right: Typical
  picture of the Boltzmann ensemble, i.e. a set secondary structures compatible with the RNA sequence, colored
  according to their respective Boltzmann factor $e^{\frac{E_s}{RT}}$.\label{fig:rna}}
\end{figure}

\subsection{Motivation}
Random generation has recently found a novel application in the \emph{in silico} prediction of RNA folding.
Namely a state-of-the-art method~\cite{DinChanLaw05} for predicting the functional folding of a given RNA sequence uses a
non-uniform random generation scheme~\cite{DiLa03}. This method aims at predicting the functional, or \Def{native},
secondary structure of an RNA, a coarse-grain representation of the three-dimensional conformation.
Based on the observation that the native structure is not necessarily that of lowest free-energy, Ding~\emph{et al} used
a model initially proposed by Mc Caskill~\cite{McCaskill1990}, and hypothesized a Boltzmann distribution based on the free-energy over the set of possible conformations.
Their method generates a representative set of 1000 secondary structures using a statistical sampling algorithm~\cite{DiLa03}.
These structures are then clustered and a consensus structure is extracted. Considering this consensus led to a
better sensibility/specificity tradeoff than previous approaches
based on free-energy minimization~\cite{Zuker1981}.

However, given the variability in length and sequence composition of real RNAs, the 1000 structures criterion seems somewhat arbitrary
and may lead to irreproducible observations in the context of highly variable observables. On the other hand, the sampled sets
of structures might feature a large level of redundancy. Our theorems provide useful tools for a quantitative characterization of such situations.

\subsection{Statistical sampling of RNA secondary structures}
An RNA sequence can be encoded by a sequence of bases A, C, G and U where local compatibility rules (A$\leftrightarrow$U, A$\leftrightarrow$U, and G$\leftrightarrow$U)
allow for a folding, i.e. a formation of chemical bounds between pairs of bases.
The RNA secondary structure constitutes a restriction of all possible base-pairings, where each base is involved in at most one base-pairs with the additional constraint
that the induced matching does not feature crossing interactions.
A simplified energy model of Nussinov~\cite{nussinov80} assigns free-energies contributions $E_{b}$
between $-3.0$ and $-1.0$ KCal.Mol$^{-1}$ to each base-pairs $b$, depending on the number of hydrogen bonds involved in the interaction.
The total free-energy $E_s = \sum_{b\in s} E_{b}$ of a secondary structure $s$ is then inherited additively,
and each secondary structure $s$ is drawn with probability proportional to its \Def{Boltzmann factor} $e^{\frac{E_s}{RT}}$
where $R$ is the perfect gaz constant and $T$ the temperature in Kelvin.

\subsection{Statistical sampling as a weighted generation}
Let us first remind that Motzkin words are well-parenthesized words featuring any number of dots characters $\UB$.
Let us define a \Def{peak} as an occurrence of a motif $\OP\;\CP$, and a $k$-\Def{plateau} as an occurrence of a motif $\OP\UB^k\CP$, $k>0$.
Let $\theta\in\mathbb{N}$ be a parameter, then one defines secondary structures as \emph{peakless} Motzkin words, or
more generally as Motzkin words that are free of $t$-plateaux, for any $t<\theta$. The correspondence between coarse-grained
conformations and Motzkin words is illustrated in Figure~\ref{fig:rna}.
 Each pair of matching parentheses represents a base-pair, and the $\theta$ constant models steric constraints and is typically set to $1$ in combinatorial
 studies~\cite{waterman78} and to $3$ in most RNA folding software. Through an adaptation of Viennot~\emph{et al}~\cite{Chaumont1983}, secondary structures
 can be generated from a non-terminal $S$ using rules
\begin{align*}
S &\to \OP S_{\ge \theta}\CP S\;|\;\UB S\;|\;\varepsilon & S_{\ge \theta} & \to \OP  S_{\ge \theta} \CP S\;|\;\UB S_{\ge \theta} \;|\;\UB^\theta.
\end{align*}

\subsection{Expected times for first collision and full collection}
Assuming a standard homopolymer model, in which any pair of base can bind, statistical sampling is equivalent to a weighted random generation,
taking $\RNAW:=e^{\frac{E}{RT}}$ as the weight of any base-pair $b$ (e.g. any occurrence of a opening parenthesis). The resulting weighted generating function is then given by
\begin{align*}
  \RNASG{\RNAW}{\theta}(z) =&  \frac{1-2z+(w+1)z^2-wz^{\theta+2}-\sqrt{\Delta_{\RNAW, \theta}}}{(1-z)2z^2}\\
   \Delta_{\RNAW, \theta} :=& 1-4z+(6-2w)z^2+4(w-1)z^3+(w-1)^2 z^4\\
   &-2wz^{\theta+2}+4wz^{\theta+3}-2w(1+w)z^{\theta+4}+w^2z^{2\theta+4}.
\end{align*}

Using our formulae, one can get estimates for the waiting times $E[B_{n,\theta,E}]$ and $E[C_{n,\theta,E}]$ for the first collision
and full collection respectively, and observes the following behaviors
\begin{align*}
  E[B_{n,1,-1}] &\sim \frac{1.24\cdot1.54^n}{\sqrt[4]{n^3}} &
  E[B_{n,3,-3}]&\sim \frac{0.85\cdot1.105^n}{\sqrt[4]{n^3}} \\
  \frac{0.64\cdot 4.33^n}{n\sqrt{n}} &\lesssim  E[C_{n,1,-1}] \lesssim \frac{1.24\cdot 4.33^n}{\sqrt{n}}   &  \frac{0.065\cdot 12.65^n}{n\sqrt{n}} &\lesssim  E[C_{n,3,-3}]\lesssim \frac{0.11\cdot 12.65^n}{\sqrt{n}} \\
\end{align*}
First one sees that the nature of these growths is unaffected by a change of weights and/or values of $\theta$. This is not really surprising, since
the grammar is strongly connected and therefore always gives rise to generating functions whose singularities are of square-root type~\cite{Drmota97}.
However the exponential growth factor is strongly affected by these variations with practical consequences. For instance considering tRNAs ($n=80$)
and using our first order approximation gives a time of first collision of $\sim 4.7\,.\,10^{13}$ samples in the $(\theta=1,E=-1)$ model, while
only $\sim 93.55$ samples are required in the $(\theta=3,E=-3)$ model for the first collision to occur.

\subsection{Collisions and coverage}

\begin{figure}
  \begin{tabular}{cc}
  {\includegraphics[width=.45\textwidth]{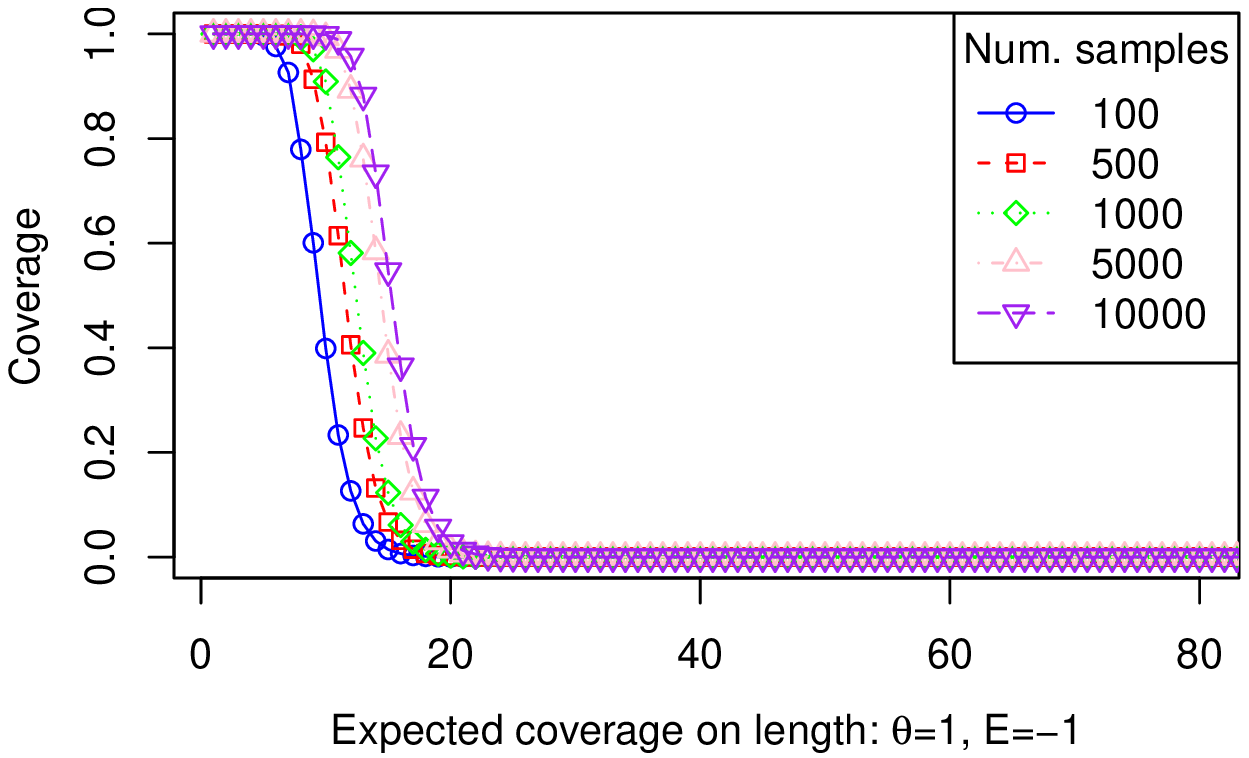}}&
  \includegraphics[width=.45\textwidth]{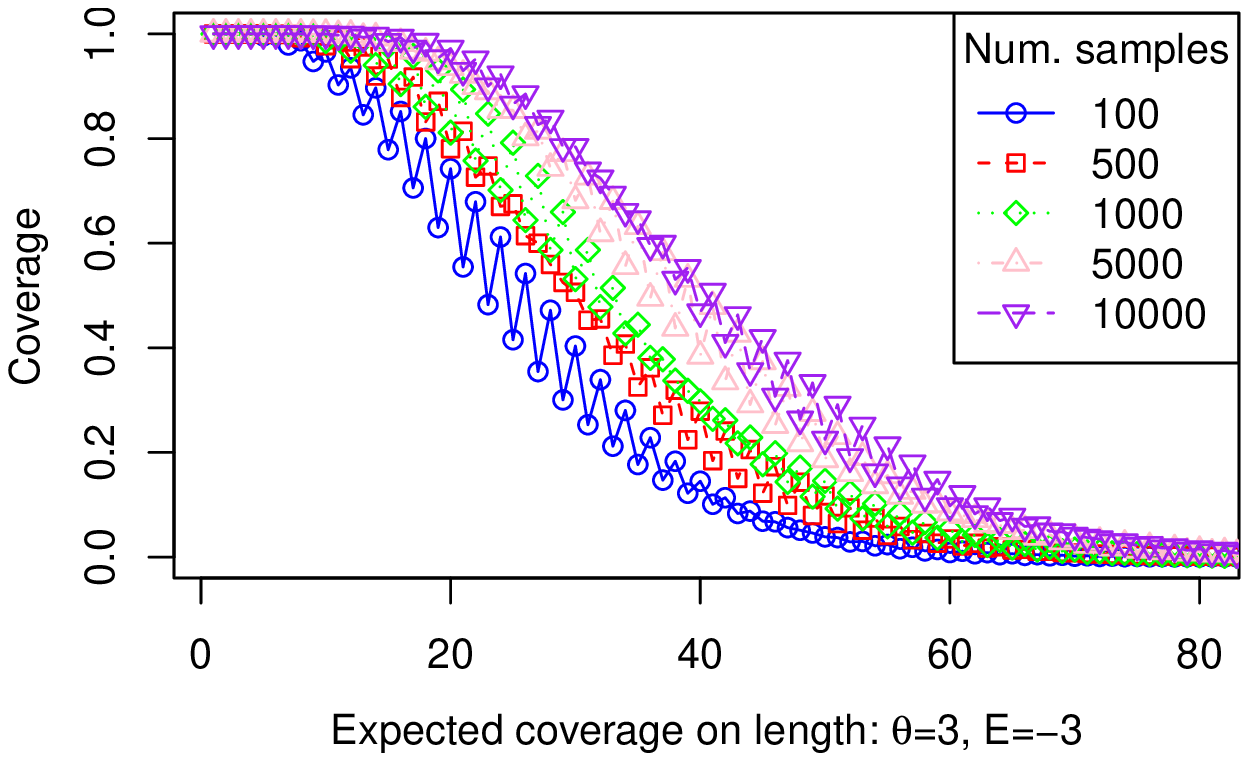}\\[0.6em]
  \includegraphics[width=.45\textwidth]{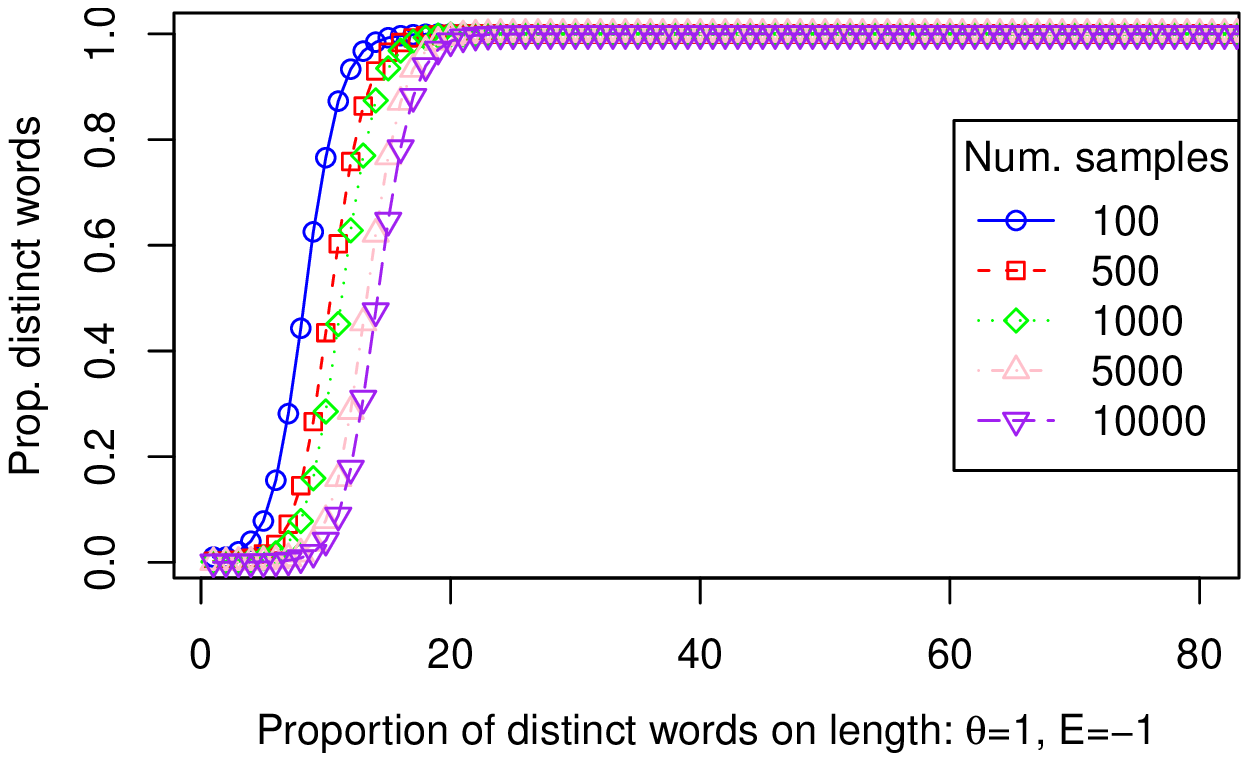}&
  \includegraphics[width=.45\textwidth]{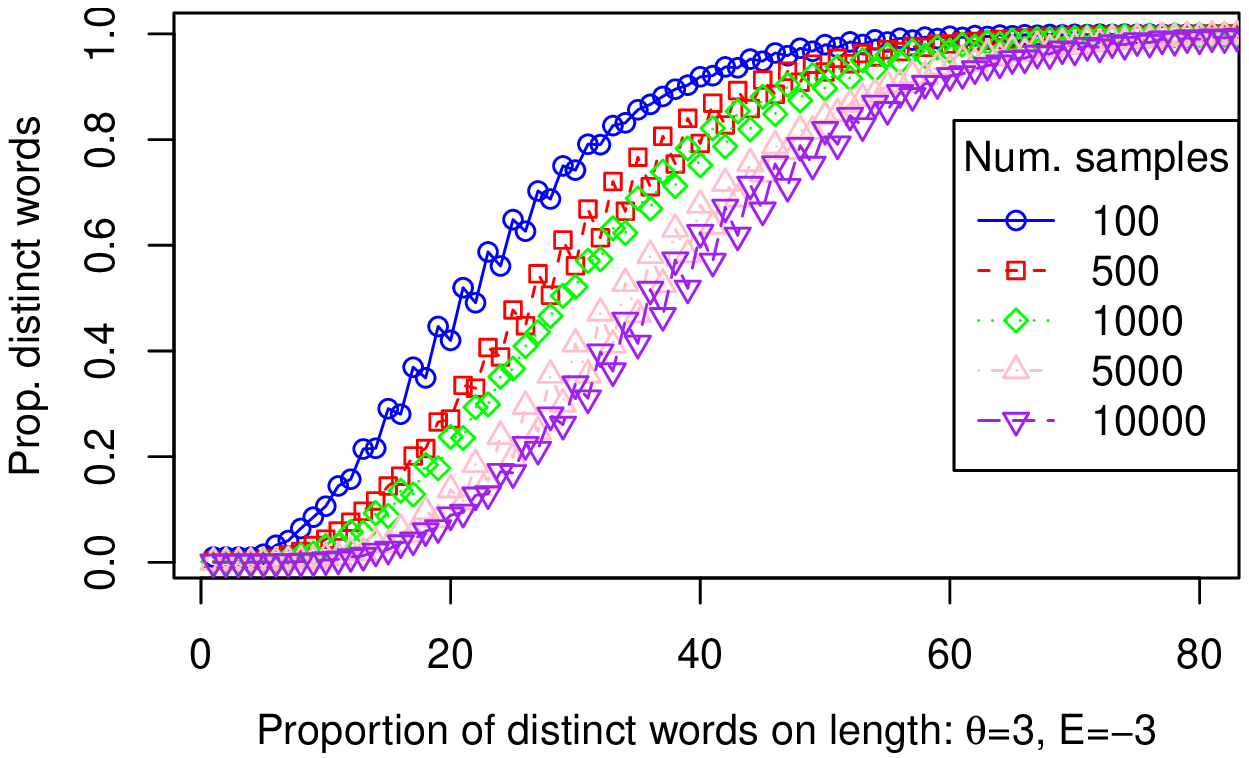}
  \end{tabular}
  \caption{Expected coverage (Top) and proportion of distinct words (Bottom) within sampled set of words of various length, considering different values
  for $\theta$ and $E$ the free-energy contribution of a base-pair.\label{fig:coverage}}
\end{figure}

Finally let us address the coverage and number of distinct samples obtained by a random generation scenario.
Remark that RNA secondary structures of length $n$ with $k$ plateaux are in bijection with Motzkin words of length $n-k\theta$ with $k$ peaks/plateaux,
where the bijection simply consists in removing the first $\theta$ horizontal steps of each plateau in every secondary structure.
Let us further remind that Dyck words with $k$ peaks and $2i$ letters are counted by the Naranaya numbers $\mathcal{N}(i,k)$,
and that Motzkin words are obtained by inserting some dots within a Dyck word.
It follows that the number $s_{n,k,i,\theta}$ of secondary structures of length $n$ featuring $i$ plateaux and $k\ge i$ base-pairs is such that
\begin{equation} s_{n,k,i,\theta} = \mathcal{N}(i,k)\binom{n-\theta k}{n-2i-\theta k} = \frac{1}{i}\binom{i}{k}\binom{i}{k-1}\binom{n-\theta k}{n-2i-\theta k}\end{equation}
Using the above formula, one can compute exactly in polynomial time the expected coverage from Theorem~\ref{th:weightedCoverage} and the proportion of distinct samples from Theorem~\ref{th:weightedDistinct},
and one obtains the results summarized in Figure~\ref{fig:coverage}.
Interestingly Figure~\ref{fig:coverage} shows that the inevitable decay of the coverage can be delayed by free-energies contributions of large absolute values.
For instance a sampled set of 1000 structures still achieves a 50$\%$ coverage for RNAs of length $<30$ for a free-energy contribution in
the $(\theta=3,E=-3)$ model while yielding a negligible coverage in the $(\theta=1,E=-1)$ model. This suggests that, for highly stable RNAs
(having low free-energy) of modest size, the 1000 structure criterion might be sufficient. Also a symmetry of the coverage and proportion
can be observed, although the amplitude of the oscillations for $\theta=3$ seem to have less of an impact on the proportion of distinct words than on their coverage.

\section{Conclusion and perspectives}\label{sec:conclusion}
In this article, we investigated the redundancy of random sets of words of context-free languages drawn with respect to a weighted distribution.
Using a random allocation model we derived exact and/or asymptotic equivalent forms for:
the expected numbers of generations prior to the first collision and full collection,
the average proportion of distinct words within a sampled set of k words and its cumulated probability.
Interestingly, the second moment of the probability distribution both appears in the asymptotic behaviors of the first
collision and the expected coverage.
We applied these theorems to analyze the output of a statistical sampling algorithm used to predict the functional
folding of RNA molecules. We showed that, although the time of first collision is exponential on the length of the RNA,
its exponential factor strongly depends on the free-energy contribution of base-pairs, and may still allow for frequent
collisions for RNAs of small -- yet relevant -- lengths.

Future directions for this work first include a better characterization of the full collection waiting time.
Namely we showed that, unsurprisingly, the waiting time is dominated by the overall (exponential) weight but
obtained lower and upper that are still separated by a $\Theta(n)$ factor. A possible direction for a tighter
bound resides in algebraic manipulations of Harmonic numbers coupled with additional
assumptions on the distribution of weights (i.e. distribution of symbols), for which local limit theorems are
known to hold under certain hypotheses. Also we may refine our analysis of RNA statistical sampling, using more sophisticated -- yet still context-free -- grammars
in order to accommodate more realistic models for the free-energy.

\section*{Acknowledgements}
The authors wish to thank the organizers of the GASCOM'08 conference in Bibbiena, Italy where the present collaboration started.
The present work was funded the \emph{Agence Nationale de la Recherche} through the BOOLE NT09$\_$432755 (DG) and the GAMMA 07-2$\_$195422 (YP) programs.

\bibliographystyle{amsplain}
\bibliography{biblio}
\newpage
\section*{Appendix}
\begin{proof}{(Theorem~\ref{th:birthday})}
From~\cite{FlaGarThi92}, the waiting time for the first birthday can be expressed as
\[
E(B) = \int_{0}^{+\infty} \lambda(t) e^{-t} dt,
\qquad
with
\qquad
\lambda(t) = \prod_{i=1}^m (1+ p_i t).
\]
Let us approximate this integral under the conditions of the theorem.
We cut the integral at $\tau$, and independently consider the part from~0 to~$\tau$,
which we expect will be dominating, and the part from~$\tau$ to~$+\infty$, which should give rise to a negligible contribution.

Let us first approximating the integral $\int_0^\tau \lambda(t) e^{-t} dt$. Consider $$\psi (t) = \log \lambda (t) - t = \sum_{i=1}^{m} \log(1+p_i t) -t.$$
Then $E(B) = \int_0^{+\infty} e^{\psi(t)} dt.$
Let us consider a positive real value $t<\tau$ such that any value $p_i t$ is uniformly bounded by some $A<1$, then
$\log (1+p_i t) = p_i t - p_i^2 t^/2 + \BigO{p_i^3 t^3}$, where the bound implied in the $\BigO{\cdot}$ term is uniform in~$p_i t$.
Summing over the whole distribution gives
\[
\psi(t) = - \left(\sum_i p_i^2\right) t^2/2 + \BBigO{\sum_i p_i^3 t^3}
\]
then $\psi(t) = - \alpha_2 t^2/2 + \BigO{\alpha_3 t^3}= - \alpha_2 t^2/2 + \BigO{\alpha^3 \tau^3}$.
Plugging this into the integral gives
\[
\int_0^\tau e^{- \alpha_2 t^2/2 + \BigO{\alpha^3 \tau^3}} dt = \int_0^\tau e^{- \alpha_2 t^2 /2} dt . (1+\BigO{\alpha_3 \tau^3}).
\]
This last integral is computed by a change of variable $u=t \sqrt{\alpha_2}$. Approximating with a Gaussian integral $\int_0 ^{+\infty} e^{- u^2/2} du = \sqrt{\pi/2}$ finally gives
\[
\int_0^\tau \lambda(t) e^{-t} dt = \sqrt{\frac{\pi}{2\alpha_2}} \,
\left( 1+ \BBigO{e^{- \tau^2 \alpha_2 /2}} + \BBigO{\alpha_3 \tau^3}
\right) .
\]
Of course, the validity of this expansion requires that
\begin{itemize}
\item The error terms in the above equation are $o(1)$: This follows from our assumptions on $\tau$, reminding that $\alpha_3 \tau^3\to 0$ (Condition~\ref{cond:thirdMoment}) and
$\alpha_2 \tau^2\to \infty$ (Condition~\ref{cond:secondMoment}).
\item Each of the terms $p_i t$ is uniformly bounded by some $A<1$: Since $p_m$ is the greatest probability, then it suffices that $p_m \tau \leq A <1$ (Condition~\ref{cond:smallLargestProba}).
\end{itemize}

Let us bound the value of the remainder $\int_\tau^{+\infty} \lambda(t) e^{-t} dt$.
We factor out the term $\lambda(\tau) e^{-\tau} = e^{\psi(\tau)}$, which we expect to be dominant.
The remaining term is
\begin{eqnarray*}
\int_\tau^{+\infty} \frac{ \lambda(t)}{\lambda(\tau)} e^{\tau -t} dt
&=&
\int_0^{+\infty} \frac{\lambda(\tau +s)}{\lambda(\tau)} e^{-s} ds
\\ &=&
\int_0^{+\infty} \prod_{i=1}^m \left( 1 + \frac{p_i}{1 + p_i \tau} \; s \right) \, e^{-s} ds
\\ &=&
\int_\tau^{+\infty} e^{\sum_{i=1}^m \log \left( 1 + \frac{p_i}{1 + p_i \tau} \; s \right) -s} ds
\end{eqnarray*}
Now, for any positive $x$, $\log(1+x) \leq x$ which gives a bound
\begin{eqnarray*}
\sum_{i=1}^m \log \left( 1 + \frac{p_i}{1 + p_i \tau} \; s \right) -s
&\leq&
\left( \sum_{i=1}^m \frac{p_i}{1 + p_i \tau} \; s \right) -s
\\ &\leq &
\sum_{i=1}^m \left( \frac{p_i}{1 + p_i \tau} - p_i \right) s = -s B (\tau),
\end{eqnarray*}
with $B (\tau) = \sum_{i=1}^m \left( \frac{- p_i}{1 + p_i \tau} + p_i \right) = \sum_i \frac{p_i^2 \tau}{1 + p_i \tau}.$
It follows that
\[
\int_0^{+\infty} \frac{\lambda(s+\tau)}{\lambda(\tau)} e^{-s} ds \leq
\int_0^{+\infty} e^{- B(\tau) s} ds = \frac{1}{B(\tau)}
\]
and finally
\[
\int_\tau^{+\infty} \lambda(t) e^{-t} dt \leq
\frac{\lambda(\tau) \, e^{-\tau}}{B(\tau)}.
\]

Let us consider the order of $B(\tau)$. We easily check that, for each $i$,
\[
0 < 1-p_i \tau < \frac{1}{1+p_i \tau} < 1,
\]
thus
\[
0 < p_i^2 \tau - p_i^3\tau^2 < \frac{p_i^2 \tau }{1+p_i \tau} < p_i^2 \tau,
\]
which gives bounds on $B(\tau)$ as
\[
\alpha_2 \tau - \alpha_3 \tau^2 < B(\tau) < \alpha_2 \tau.
\]

Finally, we can bound the error term. In order to conclude, we need to show that $$\lambda(\tau) e^{-\tau} / B(\tau) = o(1/\sqrt{\alpha_2}).$$
First rewrite the last condition as $e^{ - \alpha_2 \tau^2/2 + \BigO{\alpha_3 \tau^3}}= o(B(\tau) / \sqrt{\alpha_2})$,
taking advantage of $\lambda(\tau) =e^{\tau - \alpha^2 \tau^2/2 + O(\alpha_3 \tau^3)}$.
Assume that we have chosen $\tau$ such that $\alpha_2 \tau \rightarrow + \infty$
and $\alpha_3 \tau^3 \rightarrow 0$; then $B(\tau)$ has exact order $\alpha_2 \tau$ and the condition collapses to $e^{- \alpha_2 \tau^2/2 + O (\alpha_3 \tau^3) }= o (\tau \sqrt{\alpha_2})$, which is trivial.
\end{proof}

  \begin{proof}{(Theorem~\ref{prop:birthday})}
  Let us first remind that  the \Def{exponential order}~\cite{Flajolet2009} of a sequence $f_n$,
  is a simple exponential function $K^n$ such that $$\lim_{n\to \infty} \sup |f_n|^{1/n} = K.$$
  Following notations of the Flajolet/Sedgewick's book~\cite{Flajolet2009}, we make use of the \emph{bowtie} notation, and write $f_n\bowtie K^n$ if $f_n$
  has exponential order $K^n$.
  It is a classic result~\cite[Theorem IV.7]{Flajolet2009} that the dominant singularity $\rho$ of a generating
  function determines the exponential order of its coefficients $c_n$, namely through
  $ c_n \bowtie \rho^{-n}.$

  Since ${\rho_{\W}}^k < \rho_{\W^k}$ holds for any $k>1$ and $\W_0> {\bf 1}$ (Condition~\ref{cond:diversity}), then it follows that
  \begin{equation} s_{n,k}:=\sqrt[k]{\PF{\W_0^k}{n}} \bowtie \left(\sqrt[k]{\rho_{\W_0^k}}\right)^{-n} \text{\quad and \quad}
   \sqrt[k]{\rho_{\W_0^k}} > \rho_{\W_0}\label{eq:ineqMoments}\end{equation}
  for $\W_0$ any vector of weights strictly larger than $1$, and $\rho_{\W_0}$ the dominant singularity of $\WGF{L}{\W}$.

  This result generalizes to any pair $(a,b)\in\mathbb{R}^2$ of numbers such that $1<a<b$.
  Indeed, upon taking $\W_0 = \W^a$ and $k=b/a$ in the above equation, one has $s_{n,a}~\bowtie~\left(\sqrt[a]{\rho_{\W^a}}\right)^{-n}$,
  $s_{n,b} \bowtie \left(\sqrt[b]{\rho_{\W^b}}\right)^{-n}$, and it follows from Condition~\ref{cond:diversity} that
  \begin{equation}\sqrt[a]{\rho_{\W^a}}<\sqrt[b]{\rho_{\W^b}}.\end{equation}
  Consequently, for any $1<a<b$, $s_{n,a}$ grows exponentially faster than $s_{n,b}$, and one can use such a hierarchy to \emph{squeeze} $\tau_{n}^{-1}$
  between $\sqrt{\Mom_{2,n}}$ and $\sqrt[3]{\Mom_{3,n}}$.

  Namely let us consider $$\tau_{n}:= \frac{1}{\sqrt[k]{\Mom_{k,n}}}$$ for some $k\in \mathbb{Q}$ such that $2<k<3$. Then we have
  \begin{align*}
    \sqrt{\Mom_{2,n}} \cdot\tau_{n} = \sqrt{\frac{\PF{\W^2}{n}}{\PF{\W}{n}^2}}\sqrt[k]{\frac{\PF{\W}{n}^k}{\PF{\W^k}{n}}}
    = \frac{s_{n,2}}{s_{n,k}} \bowtie \left(\frac{\sqrt[k]{\rho_{\W^k}}}{\sqrt{\rho_{\W^2}}}\right)^{n}
  \end{align*}
  and it follows from $\sqrt{\rho_{\W^2}} < \sqrt[k]{\rho_{\W^k}}$ that
  $$  \lim_{n\to \infty} \sqrt{\Mom_{2,n}} \cdot\tau_{n} = +\infty$$
  and consequently Condition~\ref{cond:secondMoment} is satisfied by our candidate $\tau_{n}$.

  Reciprocally for Condition~\ref{cond:thirdMoment}, one has
    $$\sqrt[3]{\Mom_{3,n}} \cdot\tau_{n}  = \sqrt[3]{\frac{\PF{\W^3}{n}}{\PF{\W}{n}^3}}\sqrt[k]{\frac{\PF{\W}{n}^k}{\PF{\W^k}{n}}}
    = \frac{s_{n,3}}{s_{n,k}} \bowtie \left(\frac{\sqrt[k]{\rho_{\W^k}}}{\sqrt[3]{\rho_{\W^3}}}\right)^{n}$$
  and, since $\sqrt[3]{\rho_{\W^3}} > \sqrt[k]{\rho_{\W^k}}$, then
  $$  \lim_{n\to \infty} \sqrt[3]{\Mom_{3,n}} \cdot\tau_{n} = 0.$$

  Condition~\ref{cond:smallLargestProba} is also satisfied by $\tau_{n}$ upon observing that
   \begin{align*} \PMax{\W}{n} \cdot\tau_n & =  \frac{\WMax{\W}{n}}{\PF{\W}{n}}\sqrt[k]{\frac{\PF{\W}{n}^k}{\PF{\W^k}{n}}} = \frac{\WMax{\W}{n}}{s_{n,k}}\end{align*}
   where $\WMax{\W}{n}$ is the  weight of the heaviest (i.e. most probable) word $w^{\triangle}\in \Lang_n$. This word is also contributing to $\PF{\W^k}{n} = \sum_{w\in\Lang_n}\WF{w}^k$
   and therefore
   $$ s_{n,k} = \sqrt[k]{\PF{\W^k}{n}} = \sqrt[k]{{\WMax{\W}{n}}^k+\sum_{\substack{w\in\Lang_n\\ w\neq w^{\triangle}}}\WF{w}^k} > \WMax{\W}{n}$$
   which suffices to prove that Condition~\ref{cond:smallLargestProba} is satisfied.
   Consequently, the preconditions of Theorem~\ref{th:birthday} are satisfied by any weighted distribution.
  \end{proof}

\end{document}